\newcounter{theorem}
\renewcommand\thetheorem{\arabic{section}.\arabic{theorem}}
\newenvironment{lemma}{\par\medskip\noindent\begingroup{\bf Lemma
             \stepcounter{theorem}\thetheorem.}\ \itshape
             \def\@currentlabel{\thetheorem}}{\endgroup\par\medskip}
\newenvironment{theorem}{\par\medskip\noindent\begingroup{\bf Theorem
             \stepcounter{theorem}\thetheorem.}\ \itshape
             \def\@currentlabel{\thetheorem}}{\endgroup\par\medskip}
\newenvironment{remark}{\par\medskip\noindent\begingroup{\bf Remark
             \stepcounter{theorem}\thetheorem.}\
             \def\@currentlabel{\thetheorem}}{\endgroup\par\medskip}
\newenvironment{proof}{\par\noindent{\bf Proof.} }{\proofbox\par\medskip}
\def\proofbox{\hfill{\ensuremath\Box}}
\newdimen\LENB \newdimen\LENW \newdimen\THI
\newdimen\LENWH \newdimen\LENTOT \newcount\N
\def\vbrknlnele#1#2#3{
  \LENB=#1pt \LENW=#2pt \THI=#3pt
  \LENWH=\LENW \divide\LENWH by 2
  \LENTOT=\LENB \advance\LENTOT by \LENW
  \vbox to \LENTOT{
    \vbox to \LENWH{}
    \nointerlineskip
    \vbox to \LENB{\hbox to \THI{\vrule width \THI height \LENB}}
    \nointerlineskip
    \vbox to \LENWH{}
  }}
\def\vbrknln#1{
  \N=#1
  \vcenter{
    \vbox{
      \loop\ifnum\N>0
        \vbox to 4pt{\vbrknlnele{2}{2}{0.1}}
        \nointerlineskip
        \advance\N by -1
      \repeat
  }}}
\def\hbrknlnele#1#2#3{
  \LENB=#1pt \LENW=#2pt \THI=#3pt
  \LENTOT=\LENB \advance\LENTOT by \LENW
  \vcenter{
    \vbox to \THI{
      \hbox to \LENTOT{
        \hfil
        \vrule width \LENB height \THI
        \hfil}
  }}}
\def\journal#1&#2,{\begingroup \let\journal=\dummyjournal
               \it #1\unskip~\bf\ignorespaces #2\rm,\endgroup}
\def\dummyjournal{\errmessage{Reference foul up: nested \journal macros}}
\def\eqref#1{(\ref{#1})}
\begin{document}

\title[An integrable semi-discrete Degasperis-Procesi equation]{An
integrable semi-discrete Degasperis-Procesi equation}
\author{Bao-Feng Feng$^1$, Ken-ichi Maruno$^2$ and Yasuhiro Ohta$^{3}$ }
\date{\today}

\begin{abstract}
Based on our previous work to the Degasperis-Procesi equation (J. Phys. A {\bf 46} 045205) and the integrable semi-discrete analogue of its short wave limit (J. Phys. A {\bf 48} 135203), we derive an integrable semi-discrete Degasperis-Procesi equation by Hirota's bilinear method. Meanwhile, $N$-soliton solution to the semi-discrete Degasperis-Procesi equation is provided and proved. It is shown that the proposed semi-discrete Degasperis-Procesi equation, along with its $N$-soliton solution converge to ones of the original Degasperis-Procesi equation in the continuous limit.
\kern\bigskipamount\noindent
\end{abstract}

\pacs{02.30.Ik, 05.45.Yv,42.65.Tg, 42.81.Dp}

\address{$^1$~School of Mathematical and Statistical Sciences,
The University of Texas-Rio Grade Valley,
Edinburg, TX 78539-2999
}

\address{$^2$~Department of Applied Mathematics,
Waseda University, Tokyo 169-8050, Japan
}
\address{$^3$~Department of Mathematics,
Kobe University, Rokko, Kobe 657-8501, Japan
}
\eads{\mailto{baofeng.feng@utrgv.edu}, \mailto{kmaruno@waseda.jp} and
\mailto{ohta@math.kobe-u.ac.jp}}

\kern-\bigskipamount


\section{Introduction}
In this paper, we are concerned with integrable discretization of the Degasperis-Procesi (DP) equation~\cite{DP,DHH}
\begin{equation}
m_t+3mu_x+m_xu=0\,, \quad m=-a+u-u_{xx}\,,\label{DP-eq1}
\end{equation}
or
\begin{equation}
u_t-3a u_x- u_{txx}+4uu_x=3u_xu_{xx}+uu_{xxx}\,.\label{DP-eq}
\end{equation}

Prior to the DP equation, the Camassa-Holm (CH) equation firstly appeared in a mathematical
search of recursion operators connected with the integrable partial differential
equations \cite{Fokas81} and then has attracted considerable attention
since it was derived as a model equation for shallow water waves \cite{CH93}.
The CH and the DP equations are the only two integrable equations among the b-family equations \cite{Holm03}
\begin{equation}
m_t+bmu_x+m_xu=0\,, \quad m=-a+u-u_{xx}\,,\label{bfamily-eq1}
\end{equation}
or alternatively
\begin{equation}
u_t-ba u_x- u_{txx}+(b+1)uu_x=bu_xu_{xx}+uu_{xxx}\,,\label{bfamily-eq}
\end{equation}
with $b = 2$ and $b = 3$, respectively. \\
It is shown in \cite{Constantin} that both the CH and the DP equations can be used as models
for the propagation of shallow water waves over a flat bed, which
accommodate wave breaking phenomena. These two equations are very similar, only differing by coefficients.
They also share some common properties, for example, when $a=0$, both the CH and DP
equations have multi-peakon solutions \cite{Beals99,Beals00,Lundmark} and admit wave breaking phenomena
\cite{ConstantinEscher,Liu06}.

On the other hand, although the DP equation has an apparent
similarity to the CH equation, there are major structural
differences between these two equations such as the Lax pair, blow
up phenomena and the solutions. The isospectral problem in the Lax
pair for the DP equation is the third-order equation \cite{DHH},
while the isospectral problem for the CH equation is the second
order equation \cite{CH93}. Therefore, the DP equation is much more complicated in term
of the bi-Hamiltonian structures ~\cite{Hone-Wang}, the multi-peakon solution  \cite{Lundmark2},
the inverse scattering transform \cite{Constantin10},
the Riemann-Hilbert problem\cite{MonvelShepelskyDP}, as well as its bilinear equations
and its multi-soliton solutions \cite{Matsuno-DP1,Matsuno-DP2, FMO-DP}.

It is well-known that Hirota's bilinear method is very useful in finding multi-soliton solutions and constructing integrable discretizations to soliton equations \cite{HirotaBook}. Following this direction, the authors have succeeded in constructing integrable discretizations to a class of soliton equations with hodograph transformation. The examples include the short pulse equation and its multi-component generalizations \cite{discreteSP,WKI,discreteCSP,discreteMSP}, the CH equation and its short wave limit (the Hunter-Saxton equation) \cite{dCH,discreteSCH}, the short wave limit of the DP equation (the reduced Ostrovsky equation) \cite{discreteVE}.

The goal of this paper is to construct integrable semi-discretization of the DP equation. It is a challenging problem in compared with the CH equation. In the present paper, we attempt to propose an integrable semi-discrete DP equation based on our previous results regarding the bilinear structure of the DP equation \cite{FMO-DP} and the integrable discretizations of its short wave limit
\cite{discreteVE}.
The remainder of the present paper is organized as follows. In section 2, we gave a brief review for the bilinear equations with their connection with the negative flow of the CKP hierarchy. In section 3, semi-discrete integrable analogues of the bilinear equations of the DP equation are constructed in section 3. Then, we propose an integrable semi-discrete DP equation in section 4. The paper is concluded in section 5 by some comments and further topics.
\section{Review for the bilinear equations of the Degasperis-Procesi
equations and its $N$-soliton solution}
In this section, in order to make this paper self-contained, we will give a brief review to the bilinear equations of
the Degasperis-Procesi equation, as well as its $N$-soliton solution as
presented in \cite{FMO-DP}.
\par
We start with a sequence of $2N \times 2N$ Gram-type determinants \cite{Ohta-Satsuma}
\begin{equation}\label{CKP_det}
  F_{n}=\det_{1\leq i,j\leq 2N}\Big(m_{ij}(n)\Big)\,,
\end{equation}
where the entries of the determinant are defined by
\[
m_{ij}(n)=\delta _{j,2N+1-i}\frac{a^{2}}{2}\frac{\epsilon _{ij}(p_{i}-p_{j})%
}{(1+ap_{i})(1+ap_{j})}+\frac{1}{p_{i}+p_{j}}\left( -\frac{p_{i}}{p_{j}}%
\right) ^{n}\varphi _{i}(0)\varphi _{j}(0),
\]%
with
\[
\varphi _{i}(k)=\left( \frac{1+ap_{i}}{1-ap_{i}}\right) ^{k}e^{\xi
_{i}},\quad \xi _{i}=p^{-1}_{i}s+p_{i}y+\xi _{i0}\,, \quad \epsilon_{ij}=%
\cases{
1 &$i<j$ \cr
-1 &$i>j$}\,.
\]
Meanwhile a sequence of pfaffians which belongs to BKP hierarchy \cite{Hirota-BKP1,Ohta-bookchapter} can be defined by
\begin{equation}\label{BKP_pf}
\tau_{k}=\mathrm{Pf}(1,2,\cdots ,2N)_{k}\,
\end{equation}
whose elements are
\[
(i,j)_{k}=\delta _{j,2N+1-i}\epsilon_{ij}+\frac{p_{i}-p_{j}}{p_{i}+p_{j}}\varphi
_{i}(k)\varphi _{j}(k)\,.
\]
Imposing a reduction condition
\[ p_i^2-p_ip_{2N+1-i}+p_{2N+1-i}^2=\frac{1}{a^2},
\]
and setting
\begin{equation}\label{DP_tau}
f=\tau _{0}\,,\qquad g=\tau _{1}\,, \qquad F=F_{0}\,, \qquad G=F_{1}\,,
\end{equation}
as shown in \cite{FMO-DP}, the following relations hold between the determinants and pfaffians
\begin{eqnarray}
&&\left(D_{y}D_{s}-aD_{y}-\frac{1}{a}D_{s}\right)g\cdot f=0\,, \label{blinear_DP1}\\
&&gf=cG\,, \label{blinear_DP2} \\
&&(-aD_{y}+1)g\cdot f=cF\,, \label{blinear_DP3} \\
&&\left(\frac{1}{2}D_{y}D_{s}-1\right)F\cdot F=-G^2\,, \label{blinear_DP4}
\end{eqnarray}
where
\[
c=\prod_{i=1}^{2N}2p_{i}\frac{1+ap_{i}}{1-ap_{i}}\,,
\]
and $D_{y}D_{s}$ is the Hirota $D$-operator defined by
\[
D_{y}^{m}D_{s}^{n}f(y,s)\cdot g(y,s)=\left( \frac{\partial }{\partial y}-%
\frac{\partial }{\partial y{^{\prime }}}\right)^{m}\left( \frac{\partial }{%
\partial s}-\frac{\partial }{\partial s{^{\prime }}}\right)
^{n}f(y,s)g(y^{\prime },s^{\prime })|_{y=y^{\prime },s=s^{\prime }}\,.
\]
We comment here that the above four equations (\ref{blinear_DP1})-- (\ref{blinear_DP4}) are basically equivalent to Eqs. (2.20)--(2.24) in \cite{FMO-DP} although they seem  slightly different. In what follows, we will show briefly how the equations (\ref{blinear_DP1})-- (\ref{blinear_DP4}) yield the DP equation through a dependent variable transformation
\begin{equation}\label{tranf-u}
  u=\left(\ln \frac{g}{f}\right)_{s}\,,
\end{equation}
and a hodograph transformation
\begin{equation}\label{tranf-hodograph}
x=-\frac{1}{a}y+\ln \frac{g}{f}\,,\qquad t=s\,.
\end{equation}
Dividing $gf$ on both sides, the first three equations (\ref{blinear_DP1})-- (\ref{blinear_DP3}) can be rewritten as
\begin{equation}
(\ln gf)_{ys}+\left(\left(\ln \frac{g}{f}\right)_{y}-\frac{1}{a}\right)
\left(\left(\ln \frac{g}{f} \right)_{s}-a\right)-1=0\,,  \label{DP1n}
\end{equation}
\begin{equation}
1=\frac{cG}{{g}{f}}\,,  \label{DP2n}
\end{equation}
\begin{equation}
-a\left(\ln \frac{g}{f}\right)_{y}+1=\frac{cF}{{g}{f}}\,.  \label{DP3n}
\end{equation}%
While, by dividing $F^{2}$ on both sides, the bilinear equation (\ref{blinear_DP4}) becomes
\begin{equation}
(\ln F)_{ys}-1=-\frac{G^2}{F^2}\,.  \label{DP4n}
\end{equation}
With the use of (\ref{DP2n}), (\ref{DP3n}) becomes
\begin{equation}
-a\left(\ln \frac{g}{f}\right)_{y}+1=\frac{F}{G}\,.  \label{DP6n}
\end{equation}
Subtracting (\ref{DP4n}) from (\ref{DP1n}), one obtains
\begin{equation}
\left(\ln \frac{G}{F}\right)_{ys}+\left(\left(\ln \frac{g}{f}\right)_{y}-\frac{1}{a}\right)
\left(\left(\ln \frac{g}{f} \right)_{s}-a\right)=\frac{G^2}{F^2}  \label{DP5n}
\end{equation}
by referring to (\ref{DP2n}).

Introducing an intermediate variable $\rho=G/F$, one can calculate that
\[
\frac{\partial x}{\partial y}=-\frac{1}{a}+\left(\ln \frac{g}{f}\right)_{y}=-\frac{1}{%
a\rho},\qquad \frac{\partial x}{\partial s}=\left(\ln \frac{g}{f}\right)_{s}=u,
\]
based on the transformations (\ref{tranf-u})--(\ref{tranf-hodograph}), which yields a conversion formula
\begin{equation}\label{conversion}
\partial _{y}=-\frac{1}{a\rho}\partial _{x},\qquad \partial _{s}=\partial
_{t}+u\partial _{x}.
\end{equation}
Substituting (\ref{DP6n}) into (\ref{DP5n}), one obtains
\begin{equation}
(\ln \rho)_{ys}-\frac{1}{a\rho}(u-a)=\rho^{2}\,,  \label{DP5p}
\end{equation}
which can be rewritten as%
\begin{equation}
((\ln \rho)_{s})_{x}+u-a=-a\rho^{3}\,.  \label{DP7n}
\end{equation}
On the other hand, differentiating (\ref{DP6n}) with respect to $s$, one yields
\begin{equation}
\left(\frac{1}{\rho}\right)_{s}+ au_{y}=0\,,  \label{DP6p}
\end{equation}
which, in turn, becomes
\begin{equation}
(\ln \rho)_{s}=-u_{x}  \label{DP8n}
\end{equation}
by using the conversion formula (\ref{conversion}).

In the last, eliminating $\rho$ from (\ref{DP7n})--(\ref{DP8n}), one obtains
\begin{equation}\label{DP-alt}
(\partial_{t}+u\partial _{x})\ln (u-u_{xx}-a)=-3u_{x}\,,
\end{equation}
which is nothing but the Degasperis-Procesi equation (\ref{DP-eq}).
\section{Semi-discrete analogue of equations (\ref{blinear_DP1})--(\ref{blinear_DP4})}
Based on the results mentioned in the previous section, we attempt to construct an integrable semi-discrete analogue of the
DP equation (\ref{DP-eq}) by using Hirota's bilinear method. The key point is how to obtain discrete analogues of the equations (\ref{blinear_DP1})--(\ref{blinear_DP1}) including the bilinear equations possessing  $N$-soliton solutions.

Keeping in mind that the Degasperis-Procesi equation is derived from
pseudo 3-reduction of the CKP hierarchy, we start with the Gram-type determinants, which are soliton solutions of the CKP hierarchy,
\[
F_{k,l}=\det_{1\leq i,j\leq 2N}\Big(m_{ij}(k,l)\Big),\qquad
G_{k,l}=\det_{1\leq i,j\leq 2N}\Big(m_{ij}^{\prime }(k,l)\Big)\,,
\]%
where
\[
m_{ij}(k,l)=C_{ij}+\frac{1}{p_{i}+p_{j}}\varphi _{i}^{(0)}(k,l)\varphi
_{j}^{(0)}(k,l)\,,
\]%
\[
m_{ij}^{\prime }(k,l)=C_{ij}+\frac{1}{p_{i}+p_{j}}\left(-\frac{p_{j}}{p_{i}}\right)\frac{%
1+bp_{i}}{1-bp_{j}}\varphi _{i}^{(0)}(k,l)\varphi _{j}^{(0)}(k,l)\,,
\]%
with
\[
C_{ij}=C_{ji},\quad \varphi _{i}^{(n)}(k,l)=p_{i}^{n}\left( \frac{1+ap_{i}}{%
1-ap_{i}}\right) ^{k}\left( \frac{1+bp_{i}}{1-bp_{i}}\right) ^{l}e^{\xi
_{i}},\quad \xi _{i}=p^{-1}_{i}s +\xi _{i0}\,.
\]%
Here $2b$ (not $b$) is the mesh size in $y$-direction. The relation between $F_{k,l}$ and $G_{k,l}$ is given  by the following lemma.
\begin{lemma}
\begin{equation}
(D_{s}-2b)F_{k,l+1}\cdot F_{k,l}=-2bG_{k,l}^{2}\,,  \label{Bilinear1}
\end{equation}
\end{lemma}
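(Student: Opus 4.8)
The plan is to reduce the bilinear relation~\eqref{Bilinear1} to a single determinant identity of Jacobi (Desnanot--Jacobi) type, after recording how the three determinantal building blocks are related entrywise. First I would verify the elementary facts
\[
\begin{array}{c}
\partial_s m_{ij}(k,l)=\varphi_i^{(-1)}(k,l)\,\varphi_j^{(-1)}(k,l),\qquad m_{ij}(k,l+1)=m_{ij}(k,l)+2b\,\Phi_i\Phi_j,\\
m_{ij}'(k,l)=m_{ij}(k,l)-\varphi_i^{(-1)}(k,l)\,\Phi_j,
\end{array}
\]
where $\Phi_i:=\varphi_i^{(0)}(k,l)/(1-bp_i)$. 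Each is a one-line computation from the explicit entries: since $\partial_s\xi_i=p_i^{-1}$ one has $\partial_s[\varphi_i^{(0)}\varphi_j^{(0)}/(p_i+p_j)]=\frac{p_i^{-1}+p_j^{-1}}{p_i+p_j}\varphi_i^{(0)}\varphi_j^{(0)}=\varphi_i^{(-1)}\varphi_j^{(-1)}$, while in the other two the $(i,j)$-dependent prefactors telescope against $1/(p_i+p_j)$ --- using $(1+bp_i)(1+bp_j)-(1-bp_i)(1-bp_j)=2b(p_i+p_j)$ and $p_j(1+bp_i)+p_i(1-bp_j)=p_i+p_j$ --- leaving rank-one corrections. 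In what follows set $M:=(m_{ij}(k,l))_{1\le i,j\le 2N}$, which is symmetric since $C_{ij}=C_{ji}$, and $\psi_i:=\varphi_i^{(-1)}(k,l)=\varphi_i^{(0)}(k,l)/p_i$; note the first two facts also give $\varphi_i^{(-1)}(k,l+1)=\psi_i+2b\Phi_i$ and $(m_{ij}(k,l+1))=M+2b\Phi\Phi^{T}$.

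Since $\partial_s M=\psi\psi^{T}$ has rank one, differentiating $\det M$ and using the adjugate gives $\partial_s F_{k,l}=-\det\left(\begin{array}{cc} M & \psi\\ \psi^{T} & 0\end{array}\right)$, and the same computation at level $l+1$ gives $\partial_s F_{k,l+1}=-\det\left(\begin{array}{cc} M+2b\Phi\Phi^{T} & \psi+2b\Phi\\ (\psi+2b\Phi)^{T} & 0\end{array}\right)$. The rank-one update formula gives $F_{k,l+1}=\det(M+2b\Phi\Phi^{T})=-2b\det\left(\begin{array}{cc} M & \Phi\\ \Phi^{T} & -\frac{1}{2b}\end{array}\right)$, while $G_{k,l}=\det(M-\psi\Phi^{T})$ equals, by the symmetry of $M$, both $\det\left(\begin{array}{cc} M & \psi\\ \Phi^{T} & 1\end{array}\right)$ and $\det\left(\begin{array}{cc} M & \Phi\\ \psi^{T} & 1\end{array}\right)$.

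The heart of the argument is to introduce the $(2N+2)\times(2N+2)$ matrix
\[
\mathcal{D}=\left(\begin{array}{ccc} M & \psi & \Phi\\ \psi^{T} & 0 & 1\\ \Phi^{T} & 1 & -\frac{1}{2b}\end{array}\right)
\]
and compute $\det\mathcal{D}$ in two ways. On the one hand, Jacobi's identity applied to the last two rows and columns gives $\det\mathcal{D}\cdot F_{k,l}=\det\mathcal{D}^{2N+1}_{2N+1}\det\mathcal{D}^{2N+2}_{2N+2}-\det\mathcal{D}^{2N+1}_{2N+2}\det\mathcal{D}^{2N+2}_{2N+1}$, where $\mathcal{D}^{p}_{q}$ denotes $\mathcal{D}$ with row $p$ and column $q$ deleted; by the representations above these four minors equal $-\frac{1}{2b}F_{k,l+1}$, $-\partial_s F_{k,l}$, $G_{k,l}$ and $G_{k,l}$, so
\[
\det\mathcal{D}\cdot F_{k,l}=\frac{1}{2b}\,F_{k,l+1}\,\partial_s F_{k,l}-G_{k,l}^{2}.
\]
On the other hand, adding $2b\Phi_j$ times the last column to the $j$-th column ($j=1,\dots,2N$) and $2b$ times the last column to the $(2N+1)$-st column turns $\mathcal{D}$ into a matrix whose top-left block is $M+2b\Phi\Phi^{T}$, whose last row is $(0,\dots,0,-\frac{1}{2b})$, and whose $(2N+1)$-st column is $((\psi+2b\Phi)^{T},2b,0)^{T}$; expanding along the last row and using the $c=0$ and $c=2b$ cases of $\det\left(\begin{array}{cc}A & u\\ u^{T} & c\end{array}\right)=c\det A-u^{T}(\mathrm{adj}\,A)u$ with $A=M+2b\Phi\Phi^{T}$, $u=\psi+2b\Phi$ identifies $\det\mathcal{D}$ with $\frac{1}{2b}\partial_s F_{k,l+1}-F_{k,l+1}$. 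Equating the two expressions and clearing the factor $2b$ yields $(\partial_s F_{k,l+1})F_{k,l}-F_{k,l+1}(\partial_s F_{k,l})-2bF_{k,l+1}F_{k,l}=-2bG_{k,l}^{2}$, which is~\eqref{Bilinear1}.

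The step I expect to be the main obstacle is this second evaluation of $\det\mathcal{D}$: identifying it with the derivative combination $\frac{1}{2b}\partial_s F_{k,l+1}-F_{k,l+1}$ requires handling the continuous $s$-derivative and the discrete shift $l\mapsto l+1$ at the same time, so one must keep the relations $(m_{ij}(k,l+1))=M+2b\Phi\Phi^{T}$ and $\varphi_i^{(-1)}(k,l+1)=\psi_i+2b\Phi_i$ firmly in view through the column reduction and the expansion. Two lesser points also need care: the sign in Jacobi's identity (which is $+1$ here because the two border indices $2N+1,2N+2$ are consecutive), and the observation that although the derivation momentarily divides by $F_{k,l}=\det M$, the resulting relation is polynomial in the entries $m_{ij}$ and hence holds without restriction.
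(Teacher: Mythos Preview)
Your proof is correct and follows essentially the same approach as the paper: both establish the same three entrywise identities, represent $F_{k,l}$, $\partial_s F_{k,l}$, $F_{k,l+1}$, $G_{k,l}$ and $(\partial_s-2b)F_{k,l+1}$ as bordered determinants of the matrix $M=(m_{ij}(k,l))$, and conclude via the Jacobi (Desnanot--Jacobi) identity. Your bordered matrix $\mathcal{D}$ is, up to scaling the last column by $2b$ and changing signs in the last two rows, exactly the paper's $(2N+2)\times(2N+2)$ determinant for $(\partial_s-2b)F_{k,l+1}$; the paper simply asserts this bordered form, whereas your column-reduction step supplies the verification explicitly.
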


\begin{proof}
It can be easily verified that
\[
\partial_s m_{ij}(k,l)=\varphi_i^{(-1)}(k,l)\varphi_j^{(-1)}(k,l)\,,
\]
\[
m_{ij}(k,l+1)=m_{ij}(k,l) +\frac{2b}{(1-bp_i)(1-bp_j)}\varphi_i^{(0)}(k,l)%
\varphi_j^{(0)}(k,l)\,,
\]
and
\[
m_{ij}^{\prime }(k,l)=m_{ij}(k,l) -\frac{1}{1-bp_j}\varphi_i^{(-1)}(k,l)%
\varphi_j^{(0)}(k,l)\,.
\]
Then we have
\[
\partial_sF_{k,l}=\left|%
\matrix{
m_{ij}(k,l) &\varphi_i^{(-1)}(k,l) \cr
-\varphi_j^{(-1)}(k,l) &0}\right|\,,
\]
\[
F_{k,l+1}=\left|%
\matrix{
m_{ij}(k,l) &\displaystyle\frac{2b}{1-bp_i}\varphi_i^{(0)}(k,l) \cr
\displaystyle -\frac{1}{1-bp_j}\varphi_j^{(0)}(k,l) &1}\right|\,,
\]
\[
G_{k,l}=\left|%
\matrix{
m_{ij}(k,l) &\varphi_i^{(-1)}(k,l) \cr
\displaystyle\frac{1}{1-bp_j}\varphi_j^{(0)}(k,l) &1}\right| =\left|%
\matrix{
m_{ij}(k,l) &\displaystyle\frac{1}{1-bp_i}\varphi_i^{(0)}(k,l) \cr
\varphi_j^{(-1)}(k,l) &1}\right|\,,
\]
\[
(\partial_s-2b)F_{k,l+1} =\left|%
\matrix{
m_{ij}(k,l) &\varphi_i^{(-1)}(k,l)
&\displaystyle\frac{2b}{1-bp_i}\varphi_i^{(0)}(k,l) \cr
-\varphi_j^{(-1)}(k,l) &0 &-2b \cr
\displaystyle -\frac{1}{1-bp_j}\varphi_j^{(0)}(k,l) &-1 &1}\right|\,.
\]
By the Jacobi identity of determinants, we obtain
\[
(\partial_s-2b)F_{k,l+1}\times F_{k,l}
=F_{k,l+1}\times\partial_s F_{k,l}-(-2bG_{k,l})\times(-G_{k,l})\,,
\]
which is exactly the bilinear equation (\ref{Bilinear1}).
\end{proof}
Next, we perform reductions similar to the pseudo 3-reduction of the CKP
hierarchy in the continuous case. To this end, we take
\begin{equation}
C_{ij}=\delta_{j,2N+1-i}c_i, \qquad c_{2N+1-i}=c_i\,,
\end{equation}
and further assume
\begin{equation}
c_{ij}={2p_i}C_{ij} \frac{1+ap_i}{1-ap_j} \frac{1-bp_j}{1+bp_i}\,.
\end{equation}
By imposing a reduction condition
\begin{equation}
\frac{(1-a^2p_{2N+1-i}^2)(1-b^2p_i^2)}{p_i} +\frac{%
(1-a^2p_i^2)(1-b^2p_{2N+1-i}^2)}{p_{2N+1-i}}=0\,,
\end{equation}
or, equivalently,
\begin{equation}
\frac{p_i(1+ap_i)(1-bp_{2N+1-i})}{(1-ap_{2N+1-i})(1+bp_i)} =-\frac{%
p_{2N+1-i}(1+ap_{2N+1-i})(1-bp_i)}{(1-ap_i)(1+bp_{2N+1-i})}\,,
\end{equation}
it then follows that
\begin{eqnarray*}
c_{ij}&=&\delta_{j,2N+1-i} c_i \frac{{2p_i}(1+ap_i)}{1-ap_{2N+1-i}} \frac{%
1-bp_{2N+1-i}}{1+bp_i} \\
&=&-\delta_{j,2N+1-i}c_{2N+1-i} \frac{2p_{2N+1-i}(1+ap_{2N+1-i})} {%
1-ap_i} \frac{1-bp_i}{1+bp_{2N+1-i}} \\
&=&-c_{ji}\,.
\end{eqnarray*}
Therefore, we can define a pfaffian of the form
\[
f_{kl}= \mathrm{Pf} (1,2,\cdots,2N)_{kl}\,,
\]
whose elements defined by
\[
(i,j)_{kl}=c_{ij} +\frac{p_i-p_j}{p_i+p_j}\varphi_i^{(0)}(k,l)%
\varphi_j^{(0)}(k,l)\,.
\]
Then the following lemma provides a bilinear equation satisfied by the pfaffian $f_{kl}$.
\begin{lemma}
\begin{equation}  \label{Bilinear2}
\left(\frac{1}{a+b}D_s-1\right)f_{k+1,l+1}\cdot f_{kl} =\left(\frac{1}{a-b}%
D_s-1\right)f_{k+1,l}\cdot f_{k,l+1}\,,
\end{equation}
\end{lemma}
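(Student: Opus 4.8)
The plan is to mimic the strategy used for Lemma 3.1: express all four pfaffians appearing in \eqref{Bilinear2} as expansions of a common larger pfaffian, and then apply a pfaffian identity (the analogue of the Jacobi identity, i.e.\ the Plücker-type relation $\mathrm{Pf}(a_0,a_1,\ldots)\mathrm{Pf}(\ldots) = \cdots$) to close the computation. First I would record the action of $\partial_s$ and of the shifts $k\mapsto k+1$, $l\mapsto l+1$ on the entry $(i,j)_{kl}$. Since $\partial_s\varphi_i^{(n)}(k,l)=\varphi_i^{(n-1)}(k,l)$, we get $\partial_s (i,j)_{kl} = \frac{p_i-p_j}{p_i+p_j}\bigl(\varphi_i^{(-1)}\varphi_j^{(0)}+\varphi_i^{(0)}\varphi_j^{(-1)}\bigr)$, and the discrete shifts multiply $\varphi_i^{(0)}$ by $\frac{1+ap_i}{1-ap_i}$ (for $k$) or $\frac{1+bp_i}{1-bp_i}$ (for $l$). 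The key algebraic observation, exactly as in the $F$-case, is that each shifted or differentiated entry can be written as $(i,j)_{kl}$ plus a rank-one correction of the form $(\text{coefficient})\cdot\psi_i\,\chi_j - (\text{coefficient})\cdot\psi_j\,\chi_i$; introducing auxiliary index "letters" for these $\psi$'s and $\chi$'s lets us write $f_{k+1,l+1}$, $f_{k+1,l}$, $f_{k,l+1}$ and $\frac{1}{a\pm b}\partial_s f_{\ast}-f_{\ast}$ all as bordered pfaffians $\mathrm{Pf}(\bullet,\bullet,1,2,\ldots,2N)$ with appropriate new entries.

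Concretely I would introduce four extra characters, say $b_0$ (carrying $\varphi_i^{(-1)}$), $b_\infty$ (carrying the constant border "$1$"), and two more tied to the $k$- and $l$-direction bordering vectors $\frac{1}{1-ap_i}\varphi_i^{(0)}$ and $\frac{1}{1-bp_i}\varphi_i^{(0)}$. One then checks, entry by entry, that with the natural definitions of $(b_0,i)$, $(b_\infty,i)$, etc., the single bordered pfaffian obtained by adjoining all these characters expands—via the pfaffian expansion along a pair of rows—into the various terms on the two sides of \eqref{Bilinear2}. The reduction conditions $\frac{(1-a^2p_{2N+1-i}^2)(1-b^2p_i^2)}{p_i}+\frac{(1-a^2p_i^2)(1-b^2p_{2N+1-i}^2)}{p_{2N+1-i}}=0$ and the antisymmetry $c_{ij}=-c_{ji}$ are what guarantee that $f_{kl}$ is a genuine (antisymmetric) pfaffian and that the rank-one corrections in the $k$- and $l$-directions are compatible; these should be invoked exactly where the analogous step in Lemma 3.1 used its reduction. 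Finally, the desired identity falls out of the pfaffian analogue of the Jacobi/Plücker identity applied to the common bordered pfaffian, after collecting the coefficients of the $\frac{1}{a+b}$ and $\frac{1}{a-b}$ pieces.

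The main obstacle I anticipate is bookkeeping rather than conceptual: getting the precise rank-one decompositions of the shifted entries right, so that the coefficients $\frac{1}{a+b}$ and $\frac{1}{a-b}$ emerge with the correct signs, and making sure the four bordering characters are chosen consistently so that a \emph{single} bordered pfaffian simultaneously produces both sides of \eqref{Bilinear2}. A secondary subtlety is verifying that the reduction conditions do not merely preserve the pfaffian structure of $f_{kl}$ but also annihilate the unwanted cross terms generated when $\partial_s$ and the two shifts act together; I expect this to reduce, after using $\varphi_i^{(n)}(k,l)=p_i^n(\cdots)e^{\xi_i}$, to the stated relation among $p_i$ and $p_{2N+1-i}$, so no new condition should be needed.
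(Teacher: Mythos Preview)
Your plan is essentially the paper's proof: adjoin four auxiliary indices to $(1,\ldots,2N)_{kl}$, realise $f_{kl}$, $\partial_s f_{kl}$, $f_{k+1,l}$, $f_{k,l+1}$, $(\partial_s-a)f_{k+1,l}$, $(\partial_s-b)f_{k,l+1}$, $\tfrac{a-b}{a+b}f_{k+1,l+1}$ and $(\partial_s-a-b)\tfrac{a-b}{a+b}f_{k+1,l+1}$ as bordered pfaffians, and close with the pfaffian Pl\"ucker identity. Two small corrections to your sketch: the four extra characters carry $\varphi_i^{(-1)}(k,l)$, $\varphi_i^{(0)}(k,l)$, $\varphi_i^{(0)}(k+1,l)$, $\varphi_i^{(0)}(k,l+1)$ (with scalar cross-pairings among them, in particular $(d^l,d^k)=\tfrac{a-b}{a+b}$ and $(d_{-1},d^k)=-a$, $(d_{-1},d^l)=-b$, which is where the $\tfrac{1}{a\pm b}$ coefficients come from) rather than the $\tfrac{1}{1-ap_i}$-type borders you imported from Lemma~3.1; and the reduction condition is invoked only once, to make $c_{ij}$ antisymmetric so that $f_{kl}$ is a pfaffian at all---after that the bilinear identity \eqref{Bilinear2} holds for \emph{arbitrary} antisymmetric $c_{ij}$, and there are no residual cross terms for the reduction to kill.
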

\begin{proof}
Letting
\[
(i,d_n)_{kl}=\varphi_i^{(n)}(k,l)\,, \quad (d_m,d_n)_{kl}=0\,,
\]
\[
(i,d^k)_{kl}=\varphi_i^{(0)}(k+1,l)\,, \quad (d_0,d^k)_{kl}=1\,, \quad
(d_{-1},d^k)_{kl}=-a\,,
\]
\[
(i,d^l)_{kl}=\varphi_i^{(0)}(k,l+1)\,, \quad (d_0,d^l)_{kl}=1\,, \quad
(d_{-1},d^l)_{kl}=-b\,,
\]
and
\[
(d^l,d^k)_{kl}=\frac{a-b}{a+b}\,,
\]
it is shown in Appendix that
\begin{equation}\label{pf-relation1}
\partial_s f_{kl} =(1,2,\cdots,2N,d_{-1},d_0)_{kl}\,,
\end{equation}
\begin{equation}\label{pf-relation2}
f_{k+1,l}=(1,2,\cdots,2N,d_0,d^k)_{kl}\,,
\end{equation}
\begin{equation}\label{pf-relation3}
f_{k,l+1}=(1,2,\cdots,2N,d_0,d^l)_{kl}\,,
\end{equation}
\begin{equation}\label{pf-relation4}
(\partial_s-a)f_{k+1,l} = (1,2,\cdots,2N,d_{-1},d^k)_{kl}\,,
\end{equation}
\begin{equation}\label{pf-relation5}
(\partial_s-b)f_{k,l+1} = (1,2,\cdots,2N,d_{-1},d^l)_{kl}\,,
\end{equation}
\begin{equation}\label{pf-relation6}
\frac{a-b}{a+b}f_{k+1,l+1} = (1,2,\cdots,2N,d^l,d^k)_{kl}\,,
\end{equation}
\begin{equation}\label{pf-relation7}
(\partial_s-a-b)\frac{a-b}{a+b}f_{k+1,l+1} = (1,2,\cdots,2N,d_{-1},d_0,
d^l,d^k)_{kl}\,.
\end{equation}
Therefore, an algebraic identity of pfaffian \cite{HirotaBook}
\begin{eqnarray*}
  && {\rm Pf} (\cdots, d_{-1}, d_{0}, d^l, d^k) {\rm Pf} (\cdots)= {\rm Pf} (\cdots, d_{-1}, d_{0}) {\rm Pf} (\cdots, d^l, d^k) \\
   && \quad - {\rm Pf} (\cdots, d_{-1},  d^l) {\rm Pf} (\cdots, d_{0}, d^k) +
   {\rm Pf} (\cdots, d_{-1}, d^k) {\rm Pf} (\cdots, d_0, d^l)\,,
\end{eqnarray*}
implies
\begin{eqnarray*}
&& (\partial_s-a-b)\frac{a-b}{a+b}f_{k+1,l+1}\times f_{kl} \\
= && \frac{a-b}{a+b}f_{k+1,l+1}\times\partial_sf_{kl}
-f_{k+1,l}\times(\partial_s-b)f_{k,l+1} +(\partial_s-a)f_{k+1,l}\times
f_{k,l+1}\,,
\end{eqnarray*}
which is nothing but the bilinear equation (\ref{Bilinear2}).
\end{proof}
The lemma below states the relations between pfaffian and determinant defined previously
\begin{lemma}
\begin{equation}  \label{Bilinear3}
f_{k+1,l}f_{kl}=c'G_{kl}\,,
\end{equation}
\begin{equation}  \label{Bilinear4}
(a-b)f_{k+1,l+1}f_{kl}-(a+b)f_{k+1,l}f_{k,l+1}=-2bc'F_{k,l+1}\,,
\end{equation}
with
\[
c'=\prod_{i=1}^{2N}2p_i\frac{1+ap_i}{1-ap_i}\frac{1-bp_i}{1+bp_i}\,.
\]
\end{lemma}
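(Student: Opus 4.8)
The plan is to express both the determinant $G_{kl}$ (resp.\ $F_{k,l+1}$) and the relevant pfaffians as pfaffians over a common, enlarged set of indices, and then invoke a pfaffian expansion identity. The crucial observation is that the Gram-type determinant $G_{kl}=\det(m'_{ij}(k,l))$ and the pfaffian $f_{kl}=\mathrm{Pf}(1,\dots,2N)_{kl}$ are built from the \emph{same} functions $\varphi_i^{(0)}(k,l)$, and the reduction we imposed, $C_{ij}=\delta_{j,2N+1-i}c_i$ together with $c_{ij}={2p_i}C_{ij}\frac{1+ap_i}{1-ap_j}\frac{1-bp_j}{1+bp_i}$, is precisely what makes the "constant part" of the determinant factor through $c'$. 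So the first step is to rewrite $G_{kl}$ as a pfaffian: one writes $\det(m'_{ij})$ as a $2\times 2$ block-structured pfaffian over the index set $\{1,\dots,2N\}\cup\{1^*,\dots,(2N)^*\}$, with $(i,j^*)$-type entries equal to $m'_{ij}(k,l)$ and $(i^*,j^*)$-entries zero, using the standard fact $\det M=\mathrm{Pf}\!\begin{pmatrix}O & M\\ -M^{T} & O\end{pmatrix}$. Then one performs elementary pfaffian row/column operations (adding multiples of the starred rows) to bring it into a form whose entries are $c_{ij}+\frac{p_i-p_j}{p_i+p_j}\varphi_i^{(0)}\varphi_j^{(0)}$ up to the scalar $c'$; the algebraic manipulations here mirror exactly the continuous case in \cite{FMO-DP}.

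Next I would handle \eqref{Bilinear3}. Using the characters $d_0,d^k$ introduced in the proof of Lemma~3.2 and the pfaffian identities \eqref{pf-relation2}--\eqref{pf-relation6} established there, one recognizes $f_{k+1,l}f_{kl}$ and $G_{kl}$ as two sides of a pfaffian product that collapses under an expansion identity; more precisely, $f_{k+1,l}=(1,\dots,2N,d_0,d^k)_{kl}$ already encodes the shift $\left(\frac{1+ap_i}{1-ap_i}\right)$ built into $\varphi_i^{(0)}(k+1,l)$, and the product $f_{k+1,l}f_{kl}$ reorganizes—after the block-pfaffian rewriting of $G_{kl}$ above—into $c'G_{kl}$, the scalar $c'$ being exactly $\prod 2p_i\frac{1+ap_i}{1-ap_i}\frac{1-bp_i}{1+bp_i}$ coming from the constant entries. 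For \eqref{Bilinear4} the strategy is the same but now combining the three pfaffians $f_{k+1,l+1}$, $f_{k+1,l}$, $f_{k,l+1}$ against $f_{kl}$: the left side $(a-b)f_{k+1,l+1}f_{kl}-(a+b)f_{k+1,l}f_{k,l+1}$ is designed to match a pfaffian three-term (Plücker-type) identity on the character set $\{d_0,d^k,d^l\}$ — indeed it is the "non-derivative part" of the identity already used for \eqref{Bilinear2} — and one checks it equals $-2bc'F_{k,l+1}$ by rewriting $F_{k,l+1}$ as a block pfaffian using the relation $F_{k,l+1}=F_{k,l}+\frac{2b}{(1-bp_i)(1-bp_j)}\varphi_i^{(0)}\varphi_j^{(0)}$-type expansion from the proof of Lemma~3.1, which explains the factor $2b$.

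The main obstacle, I expect, is bookkeeping the shift factors $\frac{1+ap_i}{1-ap_i}$, $\frac{1+bp_i}{1-bp_i}$ and the sign $-\frac{p_j}{p_i}$ buried in $m'_{ij}$ and in the defining relation $c_{ij}={2p_i}C_{ij}\frac{1+ap_i}{1-ap_j}\frac{1-bp_j}{1+bp_i}$, so that after all elementary operations the bookkeeping collapses cleanly to the single scalar $c'$ and to the integers $2b$, $a\pm b$ with the correct signs — in particular verifying that the reduction condition on $p_i,p_{2N+1-i}$ is exactly what forces $c_{ij}=-c_{ji}$ and hence makes the pfaffian well defined. Once the block-pfaffian rewriting of $F$ and $G$ is set up correctly and the characters $d_0,d^k,d^l$ are inserted in the right slots, both \eqref{Bilinear3} and \eqref{Bilinear4} follow from a single application of the pfaffian expansion identity, just as \eqref{Bilinear1} followed from the Jacobi identity and \eqref{Bilinear2} from the pfaffian identity in Lemma~3.2; the remaining work is the routine but delicate verification that the prefactors agree, which I would relegate to a short computation or to the Appendix alongside \eqref{pf-relation1}--\eqref{pf-relation7}.
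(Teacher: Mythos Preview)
Your route is genuinely different from the paper's, and in one place it points in the wrong direction. The paper never rewrites $G_{kl}$ or $F_{k,l+1}$ as block pfaffians, and it does not use a three-term Pl\"ucker-type pfaffian identity on $\{d_0,d^k,d^l\}$ for \eqref{Bilinear4}. Instead the key tool, taken from \cite{FMO-DP}, is a \emph{pfaffian-product-to-determinant} identity: for \eqref{Bilinear3} one writes
\[
f_{k+1,l}\,f_{kl}
=\mathrm{Pf}\pmatrix{\matrix{(i,j)_{kl}\cr{}}&\varphi_i^{(0)}(k,l)&\varphi_i^{(0)}(k+1,l)\cr&&1}
\times\mathrm{Pf}\Big((i,j)_{kl}\Big)
=\left|\matrix{(i,j)_{kl}&\varphi_i^{(0)}(k,l)\cr\varphi_j^{(0)}(k+1,l)&1}\right|,
\]
and from that point the argument is pure determinant algebra: expand the border, combine $\frac{p_i-p_j}{p_i+p_j}\varphi_i^{(0)}\varphi_j^{(0)}-\varphi_i^{(0)}(k,l)\varphi_j^{(0)}(k+1,l)$ into $\frac{-2p_j}{p_i+p_j}\frac{1+ap_i}{1-ap_j}\varphi_i^{(0)}\varphi_j^{(0)}$, and pull out the row/column factors $2p_i\frac{1+ap_i}{1-ap_i}\frac{1-bp_i}{1+bp_i}$ to produce $c'\det(m'_{ij}(k,l))=c'G_{kl}$. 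For \eqref{Bilinear4} the same identity converts each of $\frac{a-b}{a+b}f_{k+1,l+1}f_{kl}$ and $f_{k+1,l}f_{k,l+1}$ into (bordered) determinants; their difference is then simplified by elementary row and column operations on a single $(2N{+}2)\times(2N{+}2)$ determinant, collapsing to $\frac{-2b}{a+b}c'\det(m_{ij}(k,l+1))$.

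Your plan---lift the determinant to a $4N\times4N$ block pfaffian and then meet $f_{k+1,l}f_{kl}$ there via characters---could perhaps be pushed through, but it runs opposite to the natural flow: the whole point is that the pfaffian product already \emph{is} a determinant, so one should descend to determinant land and stay there. In particular, your claim that \eqref{Bilinear4} is ``the non-derivative part'' of the algebraic pfaffian identity used in Lemma~3.2 is not how the paper proceeds and would not by itself produce the determinant $F_{k,l+1}$; the factor $-2b$ and the appearance of $m_{ij}(k,l{+}1)$ come from a concrete bordered-determinant subtraction, not from a pfaffian expansion on characters. What your sketch gets right is that the reduction condition is exactly what makes $c_{ij}=-c_{ji}$ and that the scalars collapse to $c'$; what it is missing is the single identity that does all the work.
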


\begin{proof}
By an identity of pfaffian \cite{FMO-DP}
\begin{eqnarray}
\fl &&f_{k+1,l}f_{kl}=\mathrm{Pf}\pmatrix{ \matrix{(i,j)_{kl} \cr {}}
&\varphi_i^{(0)}(k,l) &\varphi_i^{(0)}(k+1,l) \cr &&1} \times\mathrm{Pf}\Big(%
(i,j)_{kl}\Big) \nonumber \\
\fl &&\qquad =\left|\matrix{ (i,j)_{kl} &\varphi_i^{(0)}(k,l) \cr
\varphi_j^{(0)}(k+1,l) &1}\right| \nonumber \\
\fl &&\qquad =\det_{1\le i,j\le 2N}\Big( (i,j)_{kl}-\varphi_i^{(0)}(k,l)%
\varphi_j^{(0)}(k+1,l)\Big) \nonumber \\
\fl &&\qquad =\det\Big( c_{ij}+\frac{p_i-p_j}{p_i+p_j}\varphi_i^{(0)}(k,l)%
\varphi_j^{(0)}(k,l) -\varphi_i^{(0)}(k,l)\varphi_j^{(0)}(k+1,l)\Big) \nonumber \\
\fl &&\qquad =\det\Big( c_{ij}+\frac{-2p_j}{p_i+p_j}\frac{1+ap_i}{1-ap_j}
\varphi_i^{(0)}(k,l)\varphi_j^{(0)}(k,l)\Big) \nonumber \\
\fl &&\qquad =c' \det\Big(c_{ij} \frac{1}{2p_i}\frac{1-ap_j}{1+ap_i}\frac{1+bp_i}{%
1-bp_j} +\frac{1}{p_i+p_j}\left(-\frac{p_j}{p_i}\right)\frac{1+bp_i}{1-bp_j}
\varphi_i^{(0)}(k,l)\varphi_j^{(0)}(k,l)\Big) \nonumber \\
\fl &&\qquad =c'\det\Big(C_{ij} +\frac{1}{p_i+p_j}\left(-\frac{p_j}{p_i}\right)\frac{1+bp_i}{%
1-bp_j} \varphi_i^{(0)}(k,l)\varphi_j^{(0)}(k,l)\Big) \nonumber \\
\fl &&\qquad =c'\det\Big(m_{ij}^{\prime }(k,l)\Big)\,.
\end{eqnarray}
Thus, Eq. (\ref{Bilinear3}) is proved. Next, we proceed to the proof of Eq. (\ref{Bilinear4}). Firstly,
by the same identity as above, the products of pfaffians can be rewritten into determinants
\begin{eqnarray}
\fl &&\frac{a-b}{a+b}f_{k+1,l+1}f_{kl} =\mathrm{Pf}\pmatrix{ \matrix{(i,j)_{kl}
\cr {}} &\varphi_i^{(0)}(k,l+1) &\varphi_i^{(0)}(k+1,l) \cr
&&\displaystyle\frac{a-b}{a+b}} \times\mathrm{Pf}\Big((i,j)_{kl}\Big) \nonumber \\
\fl &&\qquad =\left|\matrix{ (i,j)_{kl} &\varphi_i^{(0)}(k,l+1) \cr
\varphi_j^{(0)}(k+1,l) &\displaystyle\frac{a-b}{a+b}}\right|\,,
\end{eqnarray}
\begin{eqnarray}
\fl && f_{k+1,l}f_{k,l+1} \nonumber \\
\fl && =\mathrm{Pf}\pmatrix{ \matrix{(i,j)_{kl} \cr {}} &\varphi_i^{(0)}(k,l)
&\varphi_i^{(0)}(k+1,l) \cr &&1} \times\mathrm{Pf}\pmatrix{
\matrix{(i,j)_{kl} \cr {}} &\varphi_i^{(0)}(k,l) &\varphi_i^{(0)}(k,l+1) \cr
&&1} \nonumber \\
\fl && =\left|\matrix{ (i,j)_{kl} &\varphi_i^{(0)}(k,l) &\varphi_i^{(0)}(k,l+1)
\cr -\varphi_j^{(0)}(k,l) &0 &1 \cr -\varphi_j^{(0)}(k+1,l) &-1
&\displaystyle -\frac{a-b}{a+b}}\right|\,.
\end{eqnarray}
Consequently,
\begin{eqnarray}
\fl &&\frac{a-b}{a+b}f_{k+1,l+1}f_{kl}-f_{k+1,l}f_{k,l+1} \nonumber \\
\fl && =\left|\matrix{ (i,j)_{kl} &\varphi_i^{(0)}(k,l+1) \cr
\varphi_j^{(0)}(k+1,l) &\displaystyle\frac{a-b}{a+b}}\right| -\left|\matrix{
(i,j)_{kl} &\varphi_i^{(0)}(k,l) &\varphi_i^{(0)}(k,l+1) \cr
-\varphi_j^{(0)}(k,l) &0 &1 \cr -\varphi_j^{(0)}(k+1,l) &-1 &\displaystyle
-\frac{a-b}{a+b}}\right| \nonumber \\
\fl && =\left|\matrix{ (i,j)_{kl} &\varphi_i^{(0)}(k,l) &\varphi_i^{(0)}(k,l+1)
\cr -\varphi_j^{(0)}(k,l) &1 &1 \cr \varphi_j^{(0)}(k+1,l) &1
&\displaystyle\frac{a-b}{a+b}}\right| \nonumber \\
\fl && =\left|\matrix{ (i,j)_{kl}+\varphi_i^{(0)}(k,l)\varphi_j^{(0)}(k,l)
&\varphi_i^{(0)}(k,l) &\varphi_i^{(0)}(k,l+1)-\varphi_i^{(0)}(k,l) \cr 0 &1
&0 \cr \varphi_j^{(0)}(k+1,l)+\varphi_j^{(0)}(k,l) &1
&\displaystyle\frac{-2b}{a+b}}\right| \nonumber \\
\fl && =\left|\matrix{ (i,j)_{kl}+\varphi_i^{(0)}(k,l)\varphi_j^{(0)}(k,l)
&\varphi_i^{(0)}(k,l+1)-\varphi_i^{(0)}(k,l) \cr
\varphi_j^{(0)}(k+1,l)+\varphi_j^{(0)}(k,l) &\displaystyle\frac{-2b}{a+b}}%
\right| \nonumber \\
\fl &&\ =\frac{-2b}{a+b}\det\Big( c_{ij}+\frac{2p_i(1+ap_i)(1+bp_j)}{%
(p_i+p_j)(1-bp_i)(1-ap_j)} \varphi_i^{(0)}(k,l)\varphi_j^{(0)}(k,l)\Big) \nonumber \\
&& =\frac{-2b}{a+b}\det\Big( c_{ij}+\frac{2p_i(1+ap_i)(1-bp_j)}{%
(p_i+p_j)(1+bp_i)(1-ap_j)} \varphi_i^{(0)}(k,l+1)\varphi_j^{(0)}(k,l+1)\Big)
\nonumber \\
\fl && =\frac{-2bc'}{a+b} \det\Big( c_{ij}\frac{1}{2p_i}\frac{1-ap_j}{1+ap_i}%
\frac{1+bp_i}{1-bp_j} +\frac{1}{p_i+p_j}\varphi_i^{(0)}(k,l+1)%
\varphi_j^{(0)}(k,l+1)\Big) \nonumber \\
&& =\frac{-2bc'}{a+b}\det\Big(m_{ij}(k,l+1)\Big)\,.
\end{eqnarray}
Multiplying both sides by $(a+b)$, we arrive at (\ref{Bilinear4}).
\end{proof}
Summarizing what we have discussed and letting $f_l=f_{0l}$, $g_l=f_{1l}$, $F_l=F_{0l}$, $G_l=G_{0l}$, we arrive at the
following four equations
\begin{equation}\label{Bilinear-sd1}
    \left(\frac{1}{a+b}D_s-1\right)g_{l+1}\cdot f_l =\left(\frac{1}{a-b}D_s-1\right)g_l\cdot f_{l+1}\,,
\end{equation}
\begin{equation}\label{Bilinear-sd2}
    g_lf_l=c'G_l\,,
\end{equation}
\begin{equation}\label{Bilinear-sd3}
    (D_s-2b)F_{l+1}\cdot F_l=-2bG^2_l\,,
\end{equation}
\begin{equation}\label{Bilinear-sd4}
(a-b)g_{l+1}f_l-(a+b)g_lf_{l+1}=-2bc'F_{l+1}\,.
\end{equation}
In fact, Eqs. (\ref{Bilinear-sd1})--(\ref{Bilinear-sd4}) are integrable semi-discrete analogues of Eqs. (\ref{blinear_DP1})--(\ref{blinear_DP4}). In other words, in the limit of $b \to 0$, Eqs. (\ref{Bilinear-sd1})--(\ref{Bilinear-sd4}) converge to Eqs. (\ref{blinear_DP1})--(\ref{blinear_DP4}), respectively. Meanwhile, the pfaffian and determinant solutions satisfying Eqs. (\ref{Bilinear-sd1})--(\ref{Bilinear-sd4}) also converge to the  pfaffian and determinant solutions satisfying Eqs. (\ref{blinear_DP1})--(\ref{blinear_DP4}).

Note that $2b$ is the mesh size. In the limit of $b \to 0$, we have $c' \to c$,
$$
f_{l} \to f, \quad g_{l} \to g\,, \quad f_{l+1} \to f + 2b f_{y}\,,  \quad g_{l+1} \to g + 2b g_{y}\,,
$$
and similar relations for the determinants $F_l$, $G_l$, $F_{l+1}$ and $G_{l+1}$. Obviously, (\ref{Bilinear-sd2}) goes to (\ref{blinear_DP2}) as $b \to 0$. It can be easily shown that
$$
\frac{1}{2b} D_s F_{l+1}\cdot F_{l} \to \frac 12 D_sD_y F \cdot F\,,
$$
and
$$
\frac{1}{2b} (g_{l+1}f_l-g_lf_{l+1}) \to D_y g \cdot f\,, \qquad \frac{1}{2} (g_{l+1}f_l+g_lf_{l+1}) \to gf\,.
$$
Therefore, by dividing $2b$ on both sides, (\ref{Bilinear-sd3}) and (\ref{Bilinear-sd4}) converge to (\ref{blinear_DP3}) and (\ref{blinear_DP4}), respectively, as $b \to 0$.
Now we show the convergence of the first bilinear equation. It is obvious by noting that
\begin{eqnarray}
&&\frac{a}{2b} \left(\frac {1}{a+b} D_s g_{l+1} \cdot  f_{l} - \frac{1}{a-b} D_s g_{l} \cdot f_{l+1} \right) \nonumber \\
&& \quad \to \frac{1}{2b}  \left( 1-\frac{b}{a} \right) D_s g_{l+1} \cdot  f_{l}
- \left( 1+\frac{b}{a} \right) D_s g_{l} \cdot  f_{l+1} \nonumber \\
&& \quad \to \frac{1}{2b}   D_s (g_{l+1} \cdot  f_{l}- g_{l} \cdot f_{l+1})
- \frac {1}{2a}  D_s (g_{l+1} \cdot  f_{l}+ g_{l} \cdot f_{l+1}) \nonumber \\
&& \quad \to \frac{1}{2}   D_s D_y  g \cdot  f - \frac{1}{a} g f \,,
\end{eqnarray}
and
\begin{eqnarray}
&&  \frac{a}{2b}  (g_{l+1} \cdot  f_{l}- g_{l} \cdot f_{l+1}) \nonumber \\
&& \quad \to a    D_y  g \cdot  f \,.
\end{eqnarray}
\section{Semi-discrete Degasperis-Procesi equation}
Now that we have constructed integrable semi-discrete analogues (\ref{Bilinear-sd1})--(\ref{Bilinear-sd4}) of a set of equations (\ref{blinear_DP1})--(\ref{blinear_DP4}) which derive the Degasperis-Procesi equation, we proceed to construct an integrable semi-discrete Degasperis-Procesi equation based on Hirota's bilinear method.
First, let us work on the bilinear equation (\ref{Bilinear-sd1}), which can be recast into
\begin{equation*}
\fl 2g_{l+1}f_l\left((a-b)\left(\ln\frac{g_{l+1}}{f_l}\right)_s-a^2+b^2\right)
-2g_lf_{l+1}\left((a+b)\left(\ln\frac{g_l}{f_{l+1}}\right)_s-a^2+b^2\right)=0\,,
\end{equation*}
by multiplying $2(a^2-b^2)$ on both sides, or,
\begin{eqnarray*}
\fl &&((a-b)g_{l+1}f_l+(a+b)g_lf_{l+1})
\left(\ln\frac{g_{l+1}f_{l+1}}{g_lf_l}\right)_s
\\
\fl &&\qquad+((a-b)g_{l+1}f_l-(a+b)g_lf_{l+1})
\left(\ln\frac{g_{l+1}g_l}{f_{l+1}f_l}\right)_s
-2(a^2-b^2)(g_{l+1}f_l-g_lf_{l+1})=0\,.
\end{eqnarray*}
Rearranging the terms, one obtains
\begin{eqnarray}
\fl &&((a-b)g_{l+1}f_l+(a+b)g_lf_{l+1})
\left(\ln\frac{g_{l+1}f_{l+1}}{g_lf_l}\right)_s -2b((a-b)g_{l+1}f_l+(a+b)g_lf_{l+1})
\nonumber\\
\fl &&\quad +((a-b)g_{l+1}f_l-(a+b)g_lf_{l+1})
\left(\left(\ln\frac{g_{l+1}g_l}{f_{l+1}f_l}\right)_s-2a\right)=0\,.
\label{sdDP1a}
\end{eqnarray}
Dividing $((a-b)g_{l+1}f_l+(a+b)g_lf_{l+1})$ on both sides of (\ref{sdDP1a}), we  have
\begin{equation}
\fl \left(\ln \frac{g_{l+1}f_{l+1}}{g_{l}f_{l}}\right)_{s}+\frac{%
(a-b)g_{l+1}f_{l}-(a+b)g_{l}f_{l+1}}{(a-b)g_{l+1}f_{l}+(a+b)g_{l}f_{l+1}}%
\left(\left(\ln\frac{g_{l+1}g_l}{f_{l+1}f_l}\right)_s-2a\right)-2b=0\,.  \label{sdDP1n}
\end{equation}%
Secondly, Eqs. (\ref{Bilinear-sd3}) and (\ref{Bilinear-sd4}) can be easily rewritten as
\begin{equation}
\frac{(a-b)g_{l+1}f_{l}-(a+b)g_{l}f_{l+1}}{%
(a-b)g_{l+1}f_{l}+(a+b)g_{l}f_{l+1}}=-\frac{2bc'F_{l+1}}{%
(a-b)g_{l+1}f_{l}+(a+b)g_{l}f_{l+1}}\,,  \label{sdDP3n}
\end{equation}%
and
\begin{equation}
\left(\ln \frac{F_{l+1}}{F_{l}}\right)_{s}-2b=-2b\frac{G^2_{l}}{F_{l+1}F_{l}}\,,
\label{sdDP4n}
\end{equation}%
respectively. Subtracting Eq. (\ref{sdDP4n}) from Eq. (\ref{sdDP1n}), we get
\begin{equation}
\fl \left(\ln \frac{G_{l+1}F_{l}}{G_{l}F_{l+1}}\right)_{s}+\frac{%
(a-b)g_{l+1}f_{l}-(a+b)g_{l}f_{l+1}}{(a-b)g_{l+1}f_{l}+(a+b)g_{l}f_{l+1}}%
\left(\left(\ln\frac{g_{l+1}g_l}{f_{l+1}f_l}\right)_s-2a\right)=2b\frac{G^2_{l}}{%
F_{l+1}F_{l}}  \label{sdDP5n}
\end{equation}%
by referring to Eq. (\ref{Bilinear-sd2}).

Introducing variable transformations
\begin{equation}
\label{sd_u_trf}
u_{l}=\left(\ln \frac{g_{l}}{f_{l}}\right)_{s}\,, \quad r_{l}=\frac{G_{l}}{F_{l}},
\end{equation}
and a discrete hodograph transformation
\begin{equation}
\label{sd_hodograph_trf}
\delta _{l}=-\frac{4bc'F_{l+1}}{(a-b)g_{l+1}f_{l}+(a+b)g_{l}f_{l+1}}\,, \quad t=s\,,
\end{equation}  we  then have
\begin{equation}
\frac{(a-b)g_{l+1}f_{l}-(a+b)g_{l}f_{l+1}}{%
(a-b)g_{l+1}f_{l}+(a+b)g_{l}f_{l+1}}=\frac{\delta _{l}}{2}  \label{sdDP2n}
\end{equation}%
from Eq. (\ref{sdDP3n}). A substitution of Eq. (\ref{sdDP2n})  into Eq. (\ref%
{sdDP5n}) leads to
\begin{equation}
\left(\ln \frac{r_{l+1}}{r_{l}}\right)_{s}+\delta _{l}\left(\frac{u_{l+1}+u_{l}}{2}-a\right)=2b%
\frac{F_{l}}{F_{l+1}}r_{l}^{2}\, \label{sdDP5p}
\end{equation}
by referring variable transformations.

Furthermore, based on (\ref{sdDP3n}) and (\ref{sd_hodograph_trf}), we obtain
\begin{equation}
-\frac{a-b}{b}\frac{g_{l+1}f_{l}}{c'F_{l+1}}=\frac{2}{\delta _{l}}+1\,,\qquad -%
\frac{a+b}{b}\frac{g_{l}f_{l+1}}{c'F_{l+1}}=\frac{2}{\delta _{l}}-1\,. \label{sdDP6p}
\end{equation}
Multiplying above two equations leads to
\begin{equation}
\frac{a^{2}-b^{2}}{b^{2}}\frac{F_{l}}{F_{l+1}}r_{l+1}r_{l}=\frac{4}{\delta
_{l}^{2}}-1\,,  \label{sdDP3pa}
\end{equation}%
while dividing them yields
\begin{equation}
\frac{a-b}{a+b}\frac{g_{l+1}f_{l}}{f_{l+1}g_{l}}=\frac{2+\delta _{l}}{%
2-\delta _{l}}\,.  \label{sdDP3pb}
\end{equation}
Taking logarithmic differentiation of Eq. (\ref{sdDP3pa}) and Eq. (\ref%
{sdDP3pb}) with respect to $s$, one obtains
\begin{equation}
(\ln r_{l+1}r_{l})_{s}-\left(\ln \frac{F_{l+1}}{F_{l}}\right)_{s}=\frac{-8}{(4-\delta
_{l}^{2})\delta _{l}}\frac{ d\,\delta _{l}}{d\,s}\,,  \label{sdDP4pa}
\end{equation}%
and
\begin{equation}
u_{l+1}-u_{l}=\frac{4}{4-\delta _{l}^{2}} \frac{ d\,\delta _{l}}{d\,s}\,, \,,
\label{sdDP4pb}
\end{equation}%
respectively. Eq. (\ref{sdDP4pb}) can be rewritten as
\begin{equation}
\frac{ d\,\delta _{l}}{d\,s}= \left( 1- \frac{\delta _{l}^{2}}{4} \right)(u_{l+1}-u_{l}) \,
\label{sdDP4pc}
\end{equation}%
which constitutes one of the semi-discrete DP equation, describing the time evolution of the nonuniform mesh.
Eliminating $d\delta_{l}/{ds}$ from Eqs. (\ref{sdDP4pa}) and (\ref{sdDP4pb}), one obtains
\begin{eqnarray}
\frac{u_{l+1}-u_{l}}{\delta _{l}} &=&-\frac{1}{2}(\ln r_{l+1}r_{l})_{s}+%
\frac{1}{2}\left(\ln \frac{F_{l+1}}{F_{l}}\right)_{s} \nonumber \\
&=&-\frac{1}{2}(\ln r_{l+1}r_{l})_{s}+b-b\frac{G^2_{l}}{F_{l+1}F_{l}} \nonumber  \\
&=&-\frac{1}{2}(\ln r_{l+1}r_{l})_{s}+b-b\frac{F_{l}}{F_{l+1}}r_{l}^{2} \nonumber \\
&=&-\frac{1}{2}(\ln r_{l+1}r_{l})_{s}+b-\frac{1}{2}\left(\ln \frac{r_{l+1}}{%
r_{l}}\right)_{s}-\frac{\delta _{l}}{2}\left(\frac{u_{l+1}+u_{l}}{2}-a\right)  \nonumber \\
&=&-(\ln r_{l+1})_{s}+b-\frac{\delta _{l}}{2}\left(\frac{u_{l+1}+u_{l}}{2}-a\right)\,.
\end{eqnarray}%
Here Eqs. (\ref{sdDP4n}) and (\ref{sdDP5p}) are used.


In summary, we propose an integrable semi-discrete Degasperis-Procesi equation
\begin{eqnarray}
&&\frac{1}{\delta _{l}}\left(\ln \frac{r_{l+1}}{r_{l}}\right)_{s}+\frac{u_{l+1}+u_{l}}{%
2}-a=\frac{2b}{\delta _{l}}\frac{r_{l}}{r_{l+1}}\frac{\frac{4}{\delta
_{l}^{2}}-1}{\frac{a^{2}}{b^{2}}-1}\,,  \label{sdDPf1} \\
&&(\ln r_{l+1})_{s} = -\frac{u_{l+1}-u_{l}}{\delta _{l}}+b-\frac{\delta _{l}}{2}\left(\frac{%
u_{l+1}+u_{l}}{2}-a\right)\,,  \label{sdDPf2} \\
&&\frac{ d\,\delta _{l}}{d\,s}= \left( 1- \frac{\delta _{l}^{2}}{4} \right)(u_{l+1}-u_{l})\,,
\label{sdDPf3}
\end{eqnarray}%
where an intermediate variable $r_l$ is used.
\begin{remark}
Due to the fact
$$
\frac{\delta_l}{2b}=\frac{x_{l+1}-x_l}{2b} \to \frac{\partial x}{\partial y}=-\frac{1}{ar}\,,
$$
as $b \to 0$, it is obvious that Eqs. (\ref{sdDPf1}) (or \ref{sdDP5n}) and (\ref{sdDPf2}) converge to Eqs. (\ref{DP5p}) and
(\ref{DP6p}), respectively.
\end{remark}
In order to eliminate the intermediate variable $r_l$, we substitute Eq. (\ref{sdDPf2}) into Eq. (\ref{sdDPf1}) and get
\begin{equation}\label{sd-DP6n}
\fl \frac{u_{l}-u_{l-1}}{\delta _{l-1}}+\frac{\delta _{l-1}}{2}\left(\frac{%
u_{l}+u_{l-1}}{2}-a\right)-\frac{u_{l+1}-u_{l}}{\delta _{l}}+\frac{\delta _{l}}{2}\left(\frac{%
u_{l+1}+u_{l}}{2}-a\right)=\frac{2br_{l}}{r_{l+1}}\frac{\frac{4}{\delta _{l}^{2}}-1%
}{\frac{a^{2}}{b^{2}}-1}\,.
\end{equation}
Defining
\begin{equation}\label{m-def}
\fl m_l = \frac{2}{\delta_{l}+\delta_{l-1}} \left( -\frac{u_{l+1}-u_{l}}{\delta_l} + \frac{u_{l}-u_{l-1}}{\delta_{l-1}}+
\frac{\delta_{l}(u_{l+1}+u_{l})+\delta_{l-1}(u_{l}+u_{l-1})}{4} \right) -a\,,
\end{equation}
and taking the logarithmic derivative on both sides of (\ref{sd-DP6n}), we have
\begin{eqnarray}
\fl   \frac{d\,\ln m_l}{d\,s}  &=& (\ln r_l)_s - (\ln r_{l+1})_s -\frac {8}{(4-\delta_l^2)\delta_l} \frac{d \delta_l}{d s}
  - \frac{d}{d s} \ln(\delta_l+\delta_{l-1}) \nonumber \\
\fl   & = & (\ln r_l)_s - (\ln r_{l+1})_s -2\frac{u_{l+1}-u_{l}}{\delta _{l}}-\frac{d}{d s} \ln(\delta_l+\delta_{l-1}) \qquad    \mbox{(by (\ref{sdDPf3}))}\nonumber \\
\fl  & = & -\frac{u_{l}-u_{l-1}}{\delta _{l-1}}-\frac{\delta _{l-1}}{2}\left(\frac{%
u_{l}+u_{l-1}}{2}-a\right)+\frac{u_{l+1}-u_{l}}{\delta _{l}}+\frac{\delta _{l}}{2}\left(%
\frac{u_{l+1}+u_{l}}{2}-a\right)   \nonumber \\
\fl && \quad  -2\frac{u_{l+1}-u_{l}}{\delta _{l}}-\frac{d}{d s} \ln(\delta_l+\delta_{l-1}) \qquad \mbox{(by (\ref{sdDPf2}))} \nonumber \\
 \fl  & = & -\frac{u_{l}-u_{l-1}}{\delta _{l-1}}-\frac{\delta _{l-1}}{2}\left(\frac{%
u_{l}+u_{l-1}}{2}-a\right)-\frac{u_{l+1}-u_{l}}{\delta _{l}}+\frac{\delta _{l}}{2}\left(%
\frac{u_{l+1}+u_{l}}{2}-a\right)  \nonumber \\
\fl && \quad  -\frac{1}{\delta_l+\delta_{l-1}} \left((u_{l+1}-u_{l-1}) -\frac{\delta^2_l(u_{l+1}-u_{l})+\delta^2_{l-1}(u_{l}-u_{l-1})}{4} \right)\,.
\end{eqnarray}
By defining forward difference and average operators
$$
\Delta u_l = \frac{u_{l+1}-u_l}{\delta_l}, \quad M u_l =\frac{u_{l}+u_{l-1}}{2}\,,
$$
we can summarize what we have deduced into the following theorem.
\begin{theorem}
The semi-discrete Degasperis-Procesi equation
\begin{equation}
\fl \label{sd-DP}
\left\{
\begin{array}{l} \displaystyle
 \frac{d\,\ln m_l}{d \,s}
  = -2M \Delta u_l -\frac{M (\delta_l \Delta u_l)}{M \delta_l} +
  \frac{\delta_l(M u_{l+1}-a)-\delta_{l-1}(M u_{l}-a)}{2}+\frac{M(\delta^2_{l}(u_{l+1}-u_{l}))}{4M \delta_l} \,, \\
\displaystyle \frac{d \, \delta_l}{d\, s} = \left( 1- \frac{\delta _{l}^{2}}{4} \right) (u_{l+1}-u_{l})\, \\
\displaystyle m_l= -\frac{\Delta u_l - \Delta u_{l-1}}{M \delta_l}  + \frac{M(\delta_l (M u_l))}{{M \delta_l}} -a\,,
\end{array}\right.
\end{equation}
is determined from the following equations
\begin{equation*}
\left\{
\begin{array}{l} \displaystyle
    \left(\frac{1}{a+b}D_s-1\right)g_{l+1}\cdot f_l =\left(\frac{1}{a-b}D_s-1\right)g_l\cdot f_{l+1}\,, \\
\displaystyle   g_lf_l=c'G_l\,,\\
\displaystyle (D_s-2b)F_{l+1}\cdot F_l=-2bG^2_l\,, \\
\displaystyle   (a-b)g_{l+1}f_l-(a+b)g_lf_{l+1}=-2bc'F_{l+1}\,.
\end{array}\right.
\end{equation*}
through discrete hodograph transformation
$$\delta _{l}=2\frac{(a-b)g_{l+1}f_{l}-(a+b)g_{l}f_{l+1}}{(a-b)g_{l+1}f_{l}+(a+b)g_{l}f_{l+1}}\,, \quad t=s
$$ and dependent variable transformation
$$u_l= \left(\ln \frac{g_l}{f_l}\right)_{s}$$.
\end{theorem}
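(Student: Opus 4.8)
The strategy is to convert the four bilinear identities \eqref{Bilinear-sd1}--\eqref{Bilinear-sd4} into logarithmic-derivative relations, introduce the field variables $u_l$, $r_l$ and the mesh $\delta_l$ via \eqref{sd_u_trf}--\eqref{sd_hodograph_trf}, eliminate the intermediate variable $r_l$, and finally recognize the resulting closed system as \eqref{sd-DP} written with the operators $\Delta$ and $M$. Everything is constructive and the computations have already been staged in the text; the task is to assemble them in the right order and check the eliminations close consistently.

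First I would divide \eqref{Bilinear-sd1} by $(a-b)g_{l+1}f_l+(a+b)g_lf_{l+1}$ to obtain the logarithmic form \eqref{sdDP1n}, and rewrite \eqref{Bilinear-sd3}, \eqref{Bilinear-sd4} as \eqref{sdDP3n}, \eqref{sdDP4n}. Subtracting \eqref{sdDP4n} from \eqref{sdDP1n} and using \eqref{Bilinear-sd2} to replace $g_lf_l$ by $c'G_l$ produces \eqref{sdDP5n}, an identity free of the ``plus'' combination $(a-b)g_{l+1}f_l+(a+b)g_lf_{l+1}$ except through $\delta_l$. Next I would read off \eqref{sdDP2n} from \eqref{sdDP3n} and \eqref{sd_hodograph_trf}, substitute it into \eqref{sdDP5n}, and invoke \eqref{Bilinear-sd2} once more to arrive at \eqref{sdDP5p}, which is \eqref{sdDPf1}. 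In parallel, combining \eqref{sdDP3n} with the hodograph transformation gives the pair \eqref{sdDP6p}; multiplying these two relations yields \eqref{sdDP3pa} and dividing them yields \eqref{sdDP3pb}. Taking the logarithmic $s$-derivative of \eqref{sdDP3pa} and of \eqref{sdDP3pb} produces \eqref{sdDP4pa} and \eqref{sdDP4pb}; the latter is exactly the mesh evolution \eqref{sdDPf3}, while eliminating $d\delta_l/ds$ between \eqref{sdDP4pa} and \eqref{sdDP4pb}, with repeated use of \eqref{sdDP4n} and \eqref{sdDP5p}, isolates $(\ln r_{l+1})_s$ in the form \eqref{sdDPf2}. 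At this point the triple $(u_l,\delta_l,r_l)$ satisfies the intermediate system \eqref{sdDPf1}--\eqref{sdDPf3}.

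To remove $r_l$ I would substitute \eqref{sdDPf2} (evaluated at $l$ and at $l-1$) into \eqref{sdDPf1}, obtaining \eqref{sd-DP6n}, then define $m_l$ by \eqref{m-def} so that the right-hand side of \eqref{sd-DP6n} equals $\bigl(\tfrac{a^2}{b^2}-1\bigr)^{-1}\,\tfrac{2br_l}{r_{l+1}}(\tfrac{4}{\delta_l^2}-1)$ up to the factor $m_l$ on the left. Taking the logarithmic $s$-derivative of \eqref{sd-DP6n} kills the awkward $r$-ratio because $(\ln r_l)_s-(\ln r_{l+1})_s$ and $\frac{d}{ds}\ln(\delta_l^2)$ are both already expressed through $u$ and $\delta$ via \eqref{sdDPf2} and \eqref{sdDPf3}; carrying this out and collecting terms gives the evolution of $\ln m_l$, which I would then rewrite using $\Delta u_l=(u_{l+1}-u_l)/\delta_l$ and $Mu_l=(u_l+u_{l-1})/2$ to match the first line of \eqref{sd-DP}. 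Finally I would check that \eqref{m-def} is the same as the constitutive relation $m_l=-(\Delta u_l-\Delta u_{l-1})/M\delta_l+M(\delta_l(Mu_l))/M\delta_l-a$ appearing in \eqref{sd-DP}, completing the identification.

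The main obstacle is purely in the last paragraph: the bookkeeping in differentiating \eqref{sd-DP6n} logarithmically and simplifying. One must substitute the expressions for $(\ln r_l)_s$, $(\ln r_{l+1})_s$ and $d\delta_l/ds$ at the correct lattice indices, verify that every $r$-dependence cancels, and then reorganize a moderately long expression into the compact operator form $-2M\Delta u_l-\frac{M(\delta_l\Delta u_l)}{M\delta_l}+\frac{\delta_l(Mu_{l+1}-a)-\delta_{l-1}(Mu_l-a)}{2}+\frac{M(\delta_l^2(u_{l+1}-u_l))}{4M\delta_l}$. Checking the $b\to0$ limit against \eqref{DP5p}, \eqref{DP6p} (as in the Remark) is a useful consistency test but not part of the proof proper.
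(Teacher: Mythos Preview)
Your proposal is correct and follows essentially the same route as the paper: rewrite the four bilinear relations in logarithmic form, derive the intermediate system \eqref{sdDPf1}--\eqref{sdDPf3} for $(u_l,\delta_l,r_l)$ via the products/quotients \eqref{sdDP3pa}--\eqref{sdDP3pb} and their $s$-derivatives, then eliminate $r_l$ by combining \eqref{sdDPf2} at indices $l$ and $l-1$ with \eqref{sdDPf1} to reach \eqref{sd-DP6n}, define $m_l$, and differentiate logarithmically. The only place to be careful is exactly the one you flag: in the logarithmic differentiation of \eqref{sd-DP6n} the paper uses that the left side equals $\tfrac{\delta_l+\delta_{l-1}}{2}\,m_l$, so the extra term $-\frac{d}{ds}\ln(\delta_l+\delta_{l-1})$ must be carried along and expanded via \eqref{sdDPf3} at both $l$ and $l-1$ before collecting into the $\Delta,M$ form.
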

Let us consider the continuous limit when $b \to 0$.
The dependent variable $u$ is a function of $l$ and $s$. Meanwhile, we regard
it as a function of $x$ and $t$, where $x$ is the space coordinate
at $l$-th lattice point and $t$ is the time, defined by
$$
x=x_0+\sum_{j=0}^{l-1}\delta_j\,,\qquad t=s
$$
Then in the continuous limit, $b \to 0$ ($\delta_l \to 0$), we have
$$
2M \Delta u_l= \frac{u_{l+1}-u_l}{\delta_l}+ \frac{u_{l}-u_{l-1}}{\delta_{l-1}} \to 2u_x \,,
\quad \frac{M (\delta_l \Delta u_l)}{M \delta_l}= \frac{u_{l+1}-u_{l-1}}{\delta_l+\delta_{l-1}} \to u_x\,,
$$
$$
\frac{\delta_{l-1}}{2}(M u_{l}-a) \to 0 \,,
\quad \frac{M(\delta^2_{l}(u_{l+1}-u_{l}))}{M \delta_l} \to 0\,,
$$
$$
m_l=-\frac{(\Delta u_l - \Delta u_{l-1})}{M \delta_l}  + \frac{M(\delta_l (M u_l))}{{M \delta_l}} -a  \to  m=u- u_{xx}-a\,.
$$
Moreover, since
$$
\frac{\partial x}{\partial s}
=\frac{\partial x_0}{\partial s}
+\sum_{j=0}^{l-1}\frac{\partial\delta_j}{\partial s}
=\frac{\partial x_0}{\partial s}
+\sum_{j=0}^{l-1}  \left( 1- \frac{\delta _{l}^{2}}{4} \right) (u_{j+1}-u_j)
\to u \,,
$$
we then have
$$
\partial_s=\partial_t+\frac{\partial x}{\partial s} \partial_x
\to \partial_t+u\partial_x \,.
$$
Consequently, the third equation in (\ref{sd-DP}) converges to $m=u-u_{xx}-a$. Whereas the first equation in (\ref{sd-DP}) converges to
\begin{equation}
(\partial_t+u\partial_x) m  = -3m u_x\,,
\end{equation}
which is exactly the Degasperis-Procesi equation (\ref{DP-eq}).
Based on the results in previous section, we can provide $N$-soliton solution to the semi-discrete Degasperis-Procesi equation
\begin{theorem}
The $N$-soliton solution to the semi-discrete analogue of the Degasperis-Procesi equation (\ref{sd-DP}) takes the following parametric form
\begin{equation}\label{Nsoliton-DP1}
u_l= \left(\ln \frac{g_l}{f_l}\right)_{s}, \quad   \delta _{l}=2\frac{(a-b)g_{l+1}f_{l}-(a+b)g_{l}f_{l+1}}{(a-b)g_{l+1}f_{l}+(a+b)g_{l}f_{l+1}}\,,
\end{equation}
where $g_l=f_{1l}$, $f_l=f_{0l}$ with pfaffian $f_{kl}$ defined by
\begin{equation}
\label{Nsoliton-DP2}
f_{kl}= {\rm Pf} (1,2,\cdots,2N)_{kl}\,,
\end{equation}
whose elements are
\begin{equation}
\label{Nsoliton-DP3}
(i,j)_{kl}=c_{i,j}
+\frac{p_i-p_j}{p_i+p_j}\varphi_i^{(0)}(k,l)\varphi_j^{(0)}(k,l)\,,
\end{equation}
\begin{equation}
\label{Nsoliton-DP4}
\varphi_i^{(n)}(k,l)
=p_i^n \left(\frac{1+ap_i}{1-ap_i}\right)^k \left(\frac{1+bp_i}{1-bp_i}\right)^le^{p_i^{-1} s+\xi_{i0}}
\end{equation}
under the reduction condition ($i=1, 2, \cdots, N$)
\begin{equation}
\label{Nsoliton-DP5}
\fl p_i (1-a^2p_{i}^2) (1-b^2p_{2N+1-i}^2)+  p_{2N+1-i} (1-a^2p_{2N+1-i}^2) (1-b^2p_i^2)=0\,.
\end{equation}
\end{theorem}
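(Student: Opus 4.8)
The plan is to reduce the assertion to the four bilinear equations \eqref{Bilinear-sd1}--\eqref{Bilinear-sd4}, since Section~4 already establishes, through purely algebraic manipulations, that these four equations together with the transformations \eqref{Nsoliton-DP1}, $u_l=(\ln g_l/f_l)_s$, $r_l=G_l/F_l$, and $\delta_l$ as in \eqref{sd_hodograph_trf}, imply the semi-discrete Degasperis--Procesi system \eqref{sd-DP}. Hence it suffices to check that the pfaffian $f_{kl}$ of \eqref{Nsoliton-DP2}--\eqref{Nsoliton-DP4}, subject to the reduction \eqref{Nsoliton-DP5}, together with the Gram-type determinants $F_{kl}$, $G_{kl}$ of Section~3, solve \eqref{Bilinear-sd1}--\eqref{Bilinear-sd4} after setting $f_l=f_{0l}$, $g_l=f_{1l}$, $F_l=F_{0l}$, $G_l=G_{0l}$. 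The first point to note is that \eqref{Nsoliton-DP5} coincides with the reduction condition imposed in Section~3: multiplying the latter through by $p_ip_{2N+1-i}$ produces \eqref{Nsoliton-DP5} verbatim. Therefore the entries $c_{ij}$ built from $C_{ij}=\delta_{j,2N+1-i}c_i$ are antisymmetric, $c_{ij}=-c_{ji}$, so $f_{kl}$ is a genuine pfaffian and every construction of Section~3 is legitimate.

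Granting this, the four bilinear equations follow by specialising the lemmas of Section~3 to $k=0$: equation \eqref{Bilinear-sd1} is \eqref{Bilinear2} at $k=0$, equations \eqref{Bilinear-sd2} and \eqref{Bilinear-sd4} are \eqref{Bilinear3} and \eqref{Bilinear4} at $k=0$, and \eqref{Bilinear-sd3} is \eqref{Bilinear1} at $k=0$. From there one transcribes the computation of Section~4 almost verbatim: introducing $u_l$, $r_l$, $\delta_l$ and following the chain \eqref{sdDP1n}--\eqref{sdDPf3} and then \eqref{m-def}--\eqref{sd-DP}, all of which is algebraic, one recovers \eqref{sd-DP} for this solution. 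I would also record a genericity hypothesis here, because the divisions and logarithmic differentiations in Section~4 require $f_l$, $g_l$, $F_l$, $G_l$ and the combination $(a-b)g_{l+1}f_l+(a+b)g_lf_{l+1}$ to be nonvanishing; this holds for a generic, say real and suitably ordered, choice of the parameters $p_i$ and $\xi_{i0}$.

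To justify the name ``$N$-soliton solution'' I would expand the pfaffian $f_{kl}$, equivalently the determinants $F_{kl}$, $G_{kl}$, by the standard rules and insert the exponential form \eqref{Nsoliton-DP4} of $\varphi_i^{(n)}(k,l)$, obtaining the usual finite sum of exponentials indexed by subsets of $\{1,\dots,2N\}$. The reduction \eqref{Nsoliton-DP5} pairs $p_i$ with $p_{2N+1-i}$, so the $2N$ spectral parameters carry only $N$ independent degrees of freedom, and likewise only $N$ of the phase constants $\xi_{i0}$ remain free; this is the correct parameter count for an $N$-soliton, and it is consistent with the $b\to 0$ limit recorded in Section~3 and in the Remark, under which \eqref{Nsoliton-DP1} passes to the known $N$-soliton solution of the continuous Degasperis--Procesi equation in \cite{FMO-DP}.

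The step I expect to demand real care, the rest being bookkeeping, is checking the consistency of the pseudo $3$-reduction: one must confirm that \eqref{Nsoliton-DP5} is a constraint on the spectral parameters alone, imposed through the involution $i\leftrightarrow 2N+1-i$, and not on the dynamical variables, so that the finite family $f_{0l}$, $f_{1l}$, $F_{0l}$, $G_{0l}$ closes under the $s$-flow and the $l$-shift into \eqref{Bilinear-sd1}--\eqref{Bilinear-sd4}. This is the discrete analogue of the pseudo $3$-reduction of the CKP hierarchy used in Section~2, and it is exactly what makes the rewriting of the $c_{ij}$ factors in the derivations of \eqref{Bilinear3}--\eqref{Bilinear4} valid; once that is secured the theorem is immediate.
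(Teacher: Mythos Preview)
Your proposal is correct and matches the paper's approach: the theorem is stated in the paper as an immediate consequence of the lemmas in Section~3 (which establish \eqref{Bilinear1}--\eqref{Bilinear4}) together with the algebraic derivation of Section~4 (which shows \eqref{Bilinear-sd1}--\eqref{Bilinear-sd4} imply \eqref{sd-DP} via the indicated transformations), and the paper offers no separate proof beyond the phrase ``Based on the results in previous section''. Your write-up simply spells out this chain explicitly, with the added care of noting the genericity needed for the divisions in Section~4 and the parameter count justifying the $N$-soliton terminology; these are reasonable elaborations but do not depart from the paper's route.
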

In the last, we  calculate the $\tau$-functions for one- and two-soliton solutions, and compared them with the ones
of the DP equation (\ref{DP-eq}).

\par {\bf One-soliton:} For $N=1$, we have
\begin{eqnarray}
   g_l &=& {\rm Pf}(1,2)_{10}=c_{1,2}+\frac{p_1-p_2}{p_1+p_2} \varphi_1^{(0)}(1,l)\varphi_2^{(0)}(1,l) \\
   & \propto & 1 +  e^{\xi_1(l)+\xi_2(l)+\phi_1}\,,
\end{eqnarray}
\begin{eqnarray}
   f_l &=& {\rm Pf}(1,2)_{00}=c_{1,2}+\frac{p_1-p_2}{p_1+p_2} \varphi_1^{(0)}(0,l)\varphi_2^{(0)}(0,l) \\
   & \propto & 1 +   e^{\xi_1(l)+\xi_2(l)-\phi_1}\,,
\end{eqnarray}
where
\begin{equation*}
  e^{\xi_i(l)}=\left(\frac{1+bp_i}{1-bp_i}\right)^le^{p^{-1}_is+\xi_{i0}}\,, \quad (i=1,2)\,,
  \quad e^{\phi_1}= \sqrt{\frac{(1+ap_1)(1+ap_2)}{(1-ap_1)(1-ap_2)}}\,.
\end{equation*}
Here $p_1$, $p_2=p^*_1$ are two parameters related by a constraint
\begin{equation}
p_1(1-a^2p^2_1)(1-b^2p^2_2)+p_2(1-a^2p^2_2)(1-b^2p^2_1)=0\,.
\label{reduction}
\end{equation}
Let  $p_1=A_1 e^{{\rm i} \theta_1}$, $p_2=A_1 e^{-{\rm i} \theta_1}$, and $p_1+p_2=k_1$, we than have
 \begin{equation}
1-a^2k^2_1 +(3a^2-b^2) A^2_1 +a^2b^2 A^4_1=0\,,
\end{equation}
from which $A^2$ can be solved as
\begin{equation}
A^2_1=\frac{\sqrt{(3a^2-b^2)^2-4a^2b^2(1-a^2k^2_1)}-(3a^2-b^2)}{2a^2b^2}\,.
\end{equation}
In the continuous limit of $b \to 0$ and $a=-1$, a simple calculation gives
\begin{equation}
A^2_1  \to \frac{k^2_1-1}{3}\,,
\end{equation}
and
\begin{equation}
e^{\xi_1(l)+\xi_2(l)} \to e^{-2bl(p_1+p_2) + (p^{-1}_1+p^{-1}_2) s +\eta_{10}}
\to e^{k_1 y + \frac{3k_1}{k^2_1-1} s +\eta_{10}} \,
\end{equation}
by letting $y=-2bl$. Therefore, the one-soliton solution of the semi-discrete DP equation converges to the one-soliton solution given in \cite{Matsuno-DP1,Matsuno-DP2,FMO-DP}.

\par {\bf Two-soliton:}
\noindent {\bf Two-soliton} \\
For $N=2$, by assuming $\varphi_{ij}^{(0)}(k,l) =\varphi_i^{(0)}(k,l)\varphi_j^{(0)}(k,l)$, we have
\begin{eqnarray}
\fl g_l&=&{\rm Pf}(1,2,3,4)_{10}
={\rm Pf}(1,2)_{10}{\rm Pf}(3,4)_{10}-{\rm Pf}(1,3)_{10}{\rm Pf}(2,4)_{10}
+{\rm Pf}(1,4)_{10}{\rm Pf}(2,3)_{10} \nonumber \\
\fl &=&\frac{p_1-p_2}{p_1+p_2}\varphi_{12}^{(0)}(1,l) \times
 \frac{p_3-p_4}{p_3+p_4}\varphi_{34}^{(0)}(1,l)
-\frac{p_1-p_3}{p_1+p_3}\varphi_{13}^{(0)}(1,l) \times
 \frac{p_2-p_4}{p_2+p_4}\varphi_{24}^{(0)}(1,l) \nonumber \\
 \fl &&\qquad +\left(c_{14}+\frac{p_1-p_4}{p_1+p_4} \varphi_{14}^{(0)}(1,l) \right)
\left(c_{23}+\frac{p_2-p_3}{p_2+p_3}\varphi_{23}^{(0)}(1,l)\right) \nonumber \\
 \fl &\propto & 1+e^{\xi_1(l)+\xi_4(l)+\phi_1+\gamma_1}
+e^{\xi_2(l)+\xi_3(l)+\phi_2+ \gamma_2}
+b_{12}e^{\sum^4_{j=1}\xi_j(l)+\phi_1+\phi_2+\gamma_1+\gamma_2}\,,
\end{eqnarray}
\begin{eqnarray}
\fl f_l&=&{\rm Pf}(1,2,3,4)_{00} \nonumber \\
 \fl &\propto & 1+e^{\xi_1(l)+\xi_4(l)-\phi_1+\gamma_1}
+e^{\xi_2(l)+\xi_3(l)-\phi_2+ \gamma_2}
+b_{12}e^{\sum^4_{j=1}\xi_j(l)-\phi_1-\phi_2+\gamma_1+\gamma_2}\,,
\end{eqnarray}
under the condition
\begin{equation}
p_i(1-a^2p^2_i)(1-b^2p^2_{2N+1-i})+p_{2N+1-i}(1-a^2p^2_{2N+1-i})(1-b^2p^2_i)=0\, \quad (i=1,2)\,.
\end{equation}
Here $c_{14}=c_{23}=1$, $e^{\phi_1}= \sqrt{\frac{(1+ap_1)(1+ap_4)}{(1-ap_1)(1-ap_4)}}$,
$e^{\phi_2}= \sqrt{\frac{(1+ap_2)(1+ap_3)}{(1-ap_2)(1-ap_3)}}$,
$e^{\gamma_1}=\frac{p_1-p_4}{p_1+p_4}$, $e^{\gamma_2}=\frac{p_2-p_3}{p_2+p_3}$,
and $b_{12}=\frac{(p_1-p_2)(p_1-p_3)(p_4-p_2)(p_4-p_3)}{(p_1+p_2)(p_1+p_3)(p_4+p_2)(p_4+p_3)}$.
In the continuous limit $b \to 0$, we can show that the two-soliton solutions for semi-discrete DP equation converge to the two-soliton solutions of the DP equation found in  \cite{Matsuno-DP1,Matsuno-DP2,FMO-DP} by letting $p_1+p_4=k_1$, $p_2+p_3=k_2$.
\section{Conclusion and further topics}
In the present paper, we firstly review a set of equations which drive the DP equation through a dependent variable transformation
and a hodograph transformation. Then by constructing the integrable semi-discrete analogues of these equations including bilinear equations and by defining a discrete hodograph transformation, an integrable semi-discrete DP equation was proposed.

Similar to what we have done for the CH equation \cite{dCH2}, the short pulse equation \cite{discreteSP} and a coupled short pulse
equation \cite{discreteCSP}, it deserves exploring a problem of using the proposed semi-discrete DP equation as a self-adaptive moving mesh scheme for the numerical simulation of the DP equation.
\section*{Acknowledgment}
BF appreciated the partial support by the National Natural Science Foundation of China (No. 11428102). The work of KM is partially supported by CREST, JST. The work of YO is partly supported by JSPS Grant-in-Aid for Scientific Research (B-24340029, C-15K04909) and for Challenging Exploratory Research (26610029).

\section*{Appendix A}
\renewcommand{\theequation}{A.\arabic{equation}}
\setcounter{equation}{0}
\begin{eqnarray}
  \partial_s f_{kl} &=& \partial_s \mathop{\rm Pf}_{1\le i<j\le 2N}
\Big((i,j)_{kl}\Big) \nonumber  \\
  &=& {\rm Pf}\pmatrix{
\matrix{(i,j)_{kl} \cr {}} &\varphi_i^{(-1)}(k,l) &\varphi_i^{(0)}(k,l) \cr
&&0} \nonumber \\
  &=& (1,2,\cdots,2N,d_{-1},d_0)_{kl}
\end{eqnarray}
\begin{eqnarray}
\fl f_{k+1,l}&=&\mathop{\rm Pf}_{1\le i<j\le 2N}\Big((i,j)_{kl}
+\varphi_i^{(0)}(k+1,l)\varphi_j^{(0)}(k,l)
-\varphi_i^{(0)}(k,l)\varphi_j^{(0)}(k+1,l)\Big) \nonumber
\\
\fl &=&{\rm Pf}\pmatrix{
\matrix{(i,j)_{kl} \cr {}} &\varphi_i^{(0)}(k,l) &\varphi_i^{(0)}(k+1,l) \cr
&&1} \nonumber \\
\fl &=&(1,2,\cdots,2N,d_0,d^k)_{kl}
\end{eqnarray}
\begin{eqnarray}
\fl (\partial_s-a)f_{k+1,l}
&=&(\partial_s-a){\rm Pf}\pmatrix{
\matrix{(i,j)_{kl} \cr {}} &\varphi_i^{(0)}(k,l) &\varphi_i^{(0)}(k+1,l) \cr
&&1} \nonumber
\fl \\&=&{\rm Pf}\pmatrix{
\matrix{(i,j)_{kl} \cr {}} &\varphi_i^{(-1)}(k,l) &\varphi_i^{(0)}(k,l)
&\varphi_i^{(0)}(k,l) &\varphi_i^{(0)}(k+1,l) \cr
&&0&0&0 \cr
&&&0&0 \cr
&&&&1} \nonumber \\
\fl && \ \ +{\rm Pf}\pmatrix{
\matrix{(i,j)_{kl} \cr {}} &\partial_s\varphi_i^{(0)}(k,l)
&\varphi_i^{(0)}(k+1,l) \cr
&&0} \nonumber \\
&& \ \ +{\rm Pf}\pmatrix{
\matrix{(i,j)_{kl} \cr {}} &\varphi_i^{(0)}(k,l)
&(\partial_s-a)\varphi_i^{(0)}(k+1,l) \cr
&&-a} \nonumber \\
\fl &=&{\rm Pf}\pmatrix{
\matrix{(i,j)_{kl} \cr {}} &\varphi_i^{(-1)}(k,l) &\varphi_i^{(0)}(k,l)
&0 &\varphi_i^{(0)}(k+1,l) \cr
&&0&0&0 \cr
&&&0&0 \cr
&&&&1} \nonumber \\
\fl && \ \ +{\rm Pf}\pmatrix{
\matrix{(i,j)_{kl} \cr {}} &\varphi_i^{(-1)}(k,l)
&\varphi_i^{(0)}(k+1,l) \cr
&&0} \nonumber \\
&& \ \ +{\rm Pf}\pmatrix{
\matrix{(i,j)_{kl} \cr {}} &\varphi_i^{(0)}(k,l)
&\varphi_i^{(-1)}(k,l)+a\varphi_i^{(0)}(k,l) \cr
&&-a} \nonumber \\
\fl &=& {\rm Pf}\pmatrix{
\matrix{(i,j)_{kl} \cr {}} &\varphi_i^{(-1)}(k,l) &\varphi_i^{(0)}(k,l) \cr
&&0} \nonumber \\
&& \  \ +{\rm Pf}\pmatrix{
\matrix{(i,j)_{kl} \cr {}} &\varphi_i^{(-1)}(k,l)
&\varphi_i^{(0)}(k+1,l) \cr
&&0}
+{\rm Pf}\pmatrix{
\matrix{(i,j)_{kl} \cr {}} &\varphi_i^{(0)}(k,l)
&\varphi_i^{(-1)}(k,l) \cr
&&-a} \nonumber \\
\fl &=&{\rm Pf}\pmatrix{
\matrix{(i,j)_{kl} \cr {}} &\varphi_i^{(-1)}(k,l)
&\varphi_i^{(0)}(k+1,l) \cr
&&-a} \nonumber \\
\fl &=&(1,2,\cdots,2N,d_{-1},d^k)_{kl}
\end{eqnarray}
Similarly, we have
\begin{equation}
  f_{k,l+1}=(1,2,\cdots,2N,d_0,d^l)_{kl}\,.
\end{equation}
\begin{equation}
 (\partial_s-b)f_{k,l+1}=(1,2,\cdots,2N,d_{-1},d^l)_{kl}\,.
\end{equation}

\begin{eqnarray}
\fl &&\frac{a-b}{a+b}f_{k+1,l+1}
=\frac{a-b}{a+b}{\rm Pf}\pmatrix{
\matrix{(i,j)_{k,l+1} \cr {}} &\varphi_i^{(0)}(k,l+1)
&\varphi_i^{(0)}(k+1,l+1) \cr
&&1}
 \nonumber \\
\fl  &&\qquad
=\frac{a-b}{a+b}{\rm Pf}\pmatrix{
\matrix{(i,j)_{kl} \cr {}} &\varphi_i^{(0)}(k,l) &\varphi_i^{(0)}(k,l+1)
&\varphi_i^{(0)}(k,l+1) &\varphi_i^{(0)}(k+1,l+1) \cr
&&1&0&0 \cr
&&&0&0 \cr
&&&&1}
 \nonumber \\ \fl &&\qquad
=\frac{a-b}{a+b}{\rm Pf}\pmatrix{
\matrix{(i,j)_{kl} \cr {}} &\varphi_i^{(0)}(k,l) &\varphi_i^{(0)}(k,l+1)
&\varphi_i^{(0)}(k,l+1) &\varphi_i^{(0)}(k+1,l+1)-\varphi_i^{(0)}(k,l) \cr
&&1&0&0 \cr
&&&0&1 \cr
&&&&1}
 \nonumber \\ \fl &&\qquad
=\frac{a-b}{a+b}{\rm Pf}\pmatrix{
\matrix{(i,j)_{kl} \cr {}} &\varphi_i^{(0)}(k,l) &0
&\varphi_i^{(0)}(k,l+1) &\varphi_i^{(0)}(k+1,l+1)-\varphi_i^{(0)}(k,l) \cr
&&1&0&0 \cr
&&&0&0 \cr
&&&&1}
 \nonumber \\ \fl &&\qquad
=\frac{a-b}{a+b}{\rm Pf}\pmatrix{
\matrix{(i,j)_{kl} \cr {}} &\varphi_i^{(0)}(k,l+1)
&\varphi_i^{(0)}(k+1,l+1)-\varphi_i^{(0)}(k,l) \cr
&&1}
 \nonumber \\ \fl &&\qquad
={\rm Pf}\pmatrix{
\matrix{(i,j)_{kl} \cr {}} &\varphi_i^{(0)}(k,l+1)
&\displaystyle\frac{a-b}{a+b}
(\varphi_i^{(0)}(k+1,l+1)-\varphi_i^{(0)}(k,l)) \cr
&&\displaystyle\frac{a-b}{a+b}}
 \nonumber \\ \fl &&\qquad
={\rm Pf}\pmatrix{
\matrix{(i,j)_{kl} \cr {}} &\varphi_i^{(0)}(k,l+1)
&\varphi_i^{(0)}(k+1,l)-\varphi_i^{(0)}(k,l+1) \cr
&&\displaystyle\frac{a-b}{a+b}}
\nonumber \\ \fl &&\qquad
={\rm Pf}\pmatrix{
\matrix{(i,j)_{kl} \cr {}} &\varphi_i^{(0)}(k,l+1)
&\varphi_i^{(0)}(k+1,l) \cr
&&\displaystyle\frac{a-b}{a+b}} \nonumber \\
\fl && \qquad =(1,2,\cdots,2N,d^l,d^k)_{kl}
\end{eqnarray}

\begin{eqnarray}
\fl &&(\partial_s-a-b)\frac{a-b}{a+b}f_{k+1,l+1}
=(\partial_s-a-b){\rm Pf}\pmatrix{
\matrix{(i,j)_{kl} \cr {}} &\varphi_i^{(0)}(k,l+1)
&\varphi_i^{(0)}(k+1,l) \cr
&&\displaystyle\frac{a-b}{a+b}}
\nonumber \\
\fl &&\qquad
 ={\rm Pf}\pmatrix{
\matrix{(i,j)_{kl} \cr {}} &\varphi_i^{(-1)}(k,l) &\varphi_i^{(0)}(k,l)
&\varphi_i^{(0)}(k,l+1) &\varphi_i^{(0)}(k+1,l) \cr
&&0&0&0 \cr
&&&0&0 \cr
&&&&\displaystyle\frac{a-b}{a+b}}
\nonumber \\
\fl &&\qquad
+{\rm Pf}\pmatrix{
\matrix{(i,j)_{kl} \cr {}} &(\partial_s-b)\varphi_i^{(0)}(k,l+1)
&\varphi_i^{(0)}(k+1,l) \cr
&&\displaystyle -b\frac{a-b}{a+b}}
\nonumber \\
\fl &&\qquad
+{\rm Pf}\pmatrix{
\matrix{(i,j)_{kl} \cr {}} &\varphi_i^{(0)}(k,l+1)
&(\partial_s-a)\varphi_i^{(0)}(k+1,l) \cr
&&\displaystyle -a\frac{a-b}{a+b}}
\nonumber \\
\fl &&\qquad
={\rm Pf}\pmatrix{
\matrix{(i,j)_{kl} \cr {}} &\varphi_i^{(-1)}(k,l) &\varphi_i^{(0)}(k,l)
&\varphi_i^{(0)}(k,l+1) &\varphi_i^{(0)}(k+1,l) \cr
&&0&0&0 \cr
&&&0&0 \cr
&&&&\displaystyle\frac{a-b}{a+b}}
\nonumber \\&&\qquad
+{\rm Pf}\pmatrix{
\matrix{(i,j)_{kl} \cr {}} &\varphi_i^{(-1)}(k,l)+b\varphi_i^{(0)}(k,l)
&\varphi_i^{(0)}(k+1,l) \cr
&&\displaystyle -b\frac{a-b}{a+b}}
\nonumber \\
\fl &&\qquad
+{\rm Pf}\pmatrix{
\matrix{(i,j)_{kl} \cr {}} &\varphi_i^{(0)}(k,l+1)
&\varphi_i^{(-1)}(k,l)+a\varphi_i^{(0)}(k,l) \cr
&&\displaystyle -a\frac{a-b}{a+b}}
\nonumber \\
\fl &&\qquad
={\rm Pf}\pmatrix{
\matrix{(i,j)_{kl} \cr {}} &\varphi_i^{(-1)}(k,l) &\varphi_i^{(0)}(k,l)
&\varphi_i^{(0)}(k,l+1) &\varphi_i^{(0)}(k+1,l) \cr
&&0&0&-a \cr
&&&0&1 \cr
&&&&\displaystyle\frac{a-b}{a+b}}
\nonumber \\
\fl &&\qquad
+{\rm Pf}\pmatrix{
\matrix{(i,j)_{kl} \cr {}} &\varphi_i^{(-1)}(k,l)+b\varphi_i^{(0)}(k,l)
&\varphi_i^{(0)}(k+1,l) \cr
&&\displaystyle -a\frac{a-b}{a+b}-b\frac{a-b}{a+b}}
\nonumber \\
\fl &&\qquad
={\rm Pf}\pmatrix{
\matrix{(i,j)_{kl} \cr {}} &\varphi_i^{(-1)}(k,l) &\varphi_i^{(0)}(k,l)
&\varphi_i^{(0)}(k,l+1) &\varphi_i^{(0)}(k+1,l) \cr
&&0&0&-a \cr
&&&0&1 \cr
&&&&\displaystyle\frac{a-b}{a+b}}
\nonumber \\&&\qquad
+{\rm Pf}\pmatrix{
\matrix{(i,j)_{kl} \cr {}} &\varphi_i^{(-1)}(k,l)+b\varphi_i^{(0)}(k,l)
&\varphi_i^{(0)}(k+1,l) \cr
&&-a+b} \nonumber \\
\fl &&\qquad
={\rm Pf}\pmatrix{
\matrix{(i,j)_{kl} \cr {}} &\varphi_i^{(-1)}(k,l) &\varphi_i^{(0)}(k,l)
&\varphi_i^{(0)}(k,l+1) &\varphi_i^{(0)}(k+1,l) \cr
&&0&-b&-a \cr
&&&1&1 \cr
&&&&\displaystyle\frac{a-b}{a+b}}
 \nonumber \\&&\qquad
=(1,2,\cdots,2N,d_{-1},d_0,d^l,d^k)_{kl}
\end{eqnarray}

\section*{References}

\end{document}